  \theoremstyle{plain}
  \newtheorem{thm}{Theorem}[section]
  \newtheorem{lem}[thm]{Lemma}
  \newtheorem{example}[thm]{Example}
  \theoremstyle{definition}
\begin{document}
\newcommand{\bm}{\bibitem}

\title{{\ Infinitely Many Periodic Solutions for Some  N-Body\\ Type Problems with Fixed Energies\thanks{Supported
partially by NSF of China.}}}
\author{ {\normalsize   Pengfei Yuan and Shiqing Zhang}\\
{\small Department of Mathematics, Sichuan University, Chengdu
610064, China}}
\date{}
\maketitle
\begin{abstract}In this paper, we apply the Ljusternik-Schnirelman theory with local
Palais-Smale condition to study a class of N-body problems with
strong force potentials and fixed energies. Under suitable
conditions on  the potential $V$, we
prove  the  existence of infinitely many non-constant and non-collision symmetrical periodic solutions .  \\[5pt]

\noindent{\it \bf{Key Words:}} Periodic solutions,  N-body problems,
Ljusternik-Schnirelman theory, Local Palais-Smale condition.
\\[4pt]
\bf{2000 Mathematicals Subject Classification: 34C15, 34C25, 58F}
\end{abstract}
\parskip 1pt

\section{Introduction and  Main Results}
\hspace*{\parindent} The N-body problem  is an old and very
difficult problem. Many  mathematicians studied this problem using
many different methods, here we only concern variational methods. In
1975 and 1977, Gordon $([15],[16])$ firstly used variational methods
to study the periodic solutions of 2-body problems. In $[16]$, he
put forward to the strong force condition (SF for short) and got the
following Lemma:
\begin{lem} (Gordon$[16]$) \quad Suppose that $V(x)$ satisfies the so called
Gordon's Strong Force condition: There exists a neighborhood
$\mathcal{N}$ of 0 and a
 function $U\in C^1 (\mathcal{N}\setminus\{0\},\mathbb{R})$
  such that:\\
$(i).\: \underset{|x| \rightarrow 0 }{\lim}U(x)  =-\infty ;$ \\
$(ii).\: -V(x)\geq |\nabla U(x)|^2$ for every $x\in
\mathcal{N}\,\backslash\,\{0\}. $

Let
\begin{align*}
\Lambda &\triangleq \{x |\, x(t)\in
H^1(\mathbb{R}/\mathbb{Z},\mathbb{R}^n),\, x(t)\neq 0, \forall\,
t\,\in [0,1]\}  \\
\partial\Lambda&\triangleq \{x |\, x(t)\in H^1(\mathbb{R}/\mathbb{Z},\mathbb{R}^n),\, \exists\, t_0 \in [0,1]\,
s.t.\,x(t_0)= 0
 \}
\end{align*}
 Then we have
$$
\int_0^1V(x_n)\,\mathrm{d}t \rightarrow -\infty,\quad \forall \: x_n
\rightharpoonup x \in \partial \Lambda.
$$
\end{lem}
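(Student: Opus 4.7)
My plan rests on two standard ingredients: the compact Sobolev embedding $H^1(\mathbb{R}/\mathbb{Z}, \mathbb{R}^n) \hookrightarrow C^0(\mathbb{R}/\mathbb{Z}, \mathbb{R}^n)$, which upgrades the weak convergence $x_n \rightharpoonup x$ to uniform convergence $x_n \to x$ in $C^0$; and a Cauchy--Schwarz argument applied to the chain-rule identity $\frac{d}{dt} U(x_n(t)) = \nabla U(x_n(t)) \cdot \dot x_n(t)$, combined with the pointwise bound $|\nabla U|^2 \leq -V$ from hypothesis (ii).

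\textbf{Key estimate.} Weak $H^1$-convergence gives a uniform bound $\|x_n\|_{H^1} \leq C$. Since $x(t_0) = 0$, uniform convergence yields $|x_n(t_0)| \to 0$, whence $U(x_n(t_0)) \to -\infty$ by (i). Fix $\delta > 0$ small enough that $\overline{B_\delta(0)} \subset \mathcal{N}$; continuity of $U$ on the compact sphere $\{|y| = \delta\}$ gives $K := \sup_{|y|=\delta} U(y) < \infty$. On any interval $[a,b]$ with $x_n([a,b]) \subset \mathcal{N} \setminus \{0\}$, Cauchy--Schwarz delivers
\[
|U(x_n(b)) - U(x_n(a))|^2 \leq \Big( \int_a^b |\nabla U(x_n)|^2 dt \Big)\Big( \int_a^b |\dot x_n|^2 dt \Big) \leq C^2 \int_a^b (-V(x_n))\,dt,
\]
so driving the left side to infinity on a well-chosen interval forces $\int_a^b (-V(x_n))\,dt \to \infty$.

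\textbf{Choice of interval.} I would split into cases according to whether the orbit of $x_n$ exits $B_\delta(0)$. If, along a subsequence, $x_n$ leaves $B_\delta(0)$, take $b_n$ to be the time nearest $t_0$ at which $|x_n(b_n)| = \delta$ and set $a_n = t_0$; then $x_n([a_n,b_n]) \subset \overline{B_\delta(0)}\setminus\{0\} \subset \mathcal{N}\setminus\{0\}$, $U(x_n(b_n)) \leq K$, and $U(x_n(a_n)) \to -\infty$, so the left side of the key estimate blows up. If instead $x_n([0,1]) \subset \overline{B_\delta(0)}$ for all large $n$ but $x \not\equiv 0$, pick a fixed $t_1$ with $x(t_1) \neq 0$: then $x_n(t_1) \to x(t_1)$ keeps $U(x_n(t_1))$ bounded, while $U(x_n(t_0)) \to -\infty$, and applying Cauchy--Schwarz on $[t_0, t_1]$ has the same effect.

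\textbf{Main obstacle.} Two loose ends must be tied. First, one needs to control the contribution of $V(x_n)$ over $[0,1] \setminus [a_n, b_n]$: on that complement $x_n$ lies uniformly in a compact subset of $\mathbb{R}^n \setminus \{0\}$ on which $V$ is continuous and hence bounded, so that part of the integral stays bounded and the blow-up on $[a_n,b_n]$ dominates. Second, and more subtly, the degenerate case $x \equiv 0$ has to be addressed, since then every $t$ is a near-collision and no reference point $t_1$ with $x(t_1) \neq 0$ exists; here I would argue directly that $\int_0^1 |\nabla U(x_n)|^2\,dt \to \infty$, exploiting uniform smallness of $x_n$ together with the strong-force behaviour of $|\nabla U|$ near $0$ implicit in (i)--(ii). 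I expect this degenerate case, rather than the Cauchy--Schwarz step itself, to be the real technical sticking point.
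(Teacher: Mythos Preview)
The paper does not supply a proof of this lemma; it is quoted from Gordon~[16] and, crucially, is invoked later (Lemma~3.1, case~(2)) only when the weak limit is \emph{not} identically zero---the degenerate case $u\equiv 0$ is treated there by a separate argument using the extra hypothesis~$(V6')$. Your Cauchy--Schwarz/exit-time argument for the non-degenerate case is exactly Gordon's and is correct. One small repair: on the complement $[0,1]\setminus[a_n,b_n]$ you cannot assert that $x_n$ stays in a compact set away from~$0$, since there may be other near-collision times; instead observe that SF~(ii) already forces $V\le 0$ throughout $\mathcal N\setminus\{0\}$, so those times can only push $\int_0^1 V(x_n)\,dt$ further down, while on $\{t:|x_n(t)|\ge\delta\}$ continuity of $V$ on a compact annulus gives a uniform upper bound.

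Your caution about the degenerate case $x\equiv 0$ is well placed, but your proposed fix cannot succeed: the conclusion is in fact \emph{false} under (i)--(ii) alone. In $\mathbb{R}^2$, set $u(r)=-\int_r^1 s^{-2}\bigl(1+\cos(1/s)\bigr)\,ds$, $U(y)=u(|y|)$, and $V(y)=-|\nabla U(y)|^2=-u'(|y|)^2$. Then $U\to-\infty$ at the origin and (ii) holds with equality, yet $u'(s_k)=0$ along $s_k=\bigl((2k{+}1)\pi\bigr)^{-1}\to 0$. The circular loops $x_k(t)=s_k(\cos 2\pi t,\sin 2\pi t)$ lie in $\Lambda$, converge weakly (indeed strongly) to $0\in\partial\Lambda$, and satisfy $\int_0^1 V(x_k)\,dt=0$ for every~$k$. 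Thus an additional hypothesis such as $V(\xi)\to-\infty$ as $\xi\to 0$ (compare $(V3)$ and $(V6')$) is genuinely needed to cover $x\equiv 0$, and this is precisely what the paper supplies when it actually applies the lemma.
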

After Gordon, many researchers used variational methods to study N
($N\geq 3$) body problems $([1]-[8], [10]-[19],[22]-[27],etc.)$.

In $[18]$, P.Majer used Ljusternik-Schnirelman  theory (LS for
short) with local Palais-Smale condition  to seek T-periodic
solutions of the following second order Hamiltonian system:
$$
\ddot{x}+ax+\nabla_x W(t,x)=0 \eqno{(1.1)}
$$
where $W$ is singular at $x=0, W(t+T,x)=W(t,x).$

He got the following  theorem :
\begin{thm}
Suppose

$\quad(P_1). a<(\dfrac{\pi}{T})^2;$

$\quad(P_2). W\in C^1(S_T^1\times (\mathbb{R}^n\setminus
\{0\}),\mathbb{R})$ satisfies $(SF)$;

$\quad(P_3). \exists \, c ,\, \theta<2,\, r>0 $, such that
$\forall\,|x|\geq r, \forall\, t\in S_T^1$
$$
W(t,x)\leq c|x|^\theta, \quad \langle \nabla W_x(t,x),x
\rangle-2W(t,x)\leq c|x|^\theta
$$

$\quad(P_4). W(t,x)\leq b .$

Then the system $(1.1)$ has infinitely many T-periodic non-collision
solutions .
\end{thm}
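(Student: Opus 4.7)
The plan is to produce infinitely many non-collision $T$-periodic solutions of $(1.1)$ as critical points of the action functional
$$f(x) = \int_0^T \Bigl[\tfrac12|\dot x|^2 - \tfrac{a}{2}|x|^2 - W(t,x)\Bigr]\,dt$$
on the open set $\Lambda_T = \{x\in H^1(S_T^1,\mathbb R^n) : x(t)\ne 0 \text{ for all } t\}$, via a Ljusternik--Schnirelman minimax together with the local Palais--Smale condition that copes with the singularity of $W$ at the origin.

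The first step is to set up the deformation machinery. One checks that $f\in C^1(\Lambda_T,\mathbb R)$ and that Gordon's Lemma 1.1, applied via the strong-force hypothesis $(P_2)$ together with the upper bound $(P_4)$, forces $f(x_k)\to +\infty$ along every sequence $x_k\in\Lambda_T$ with $x_k \rightharpoonup x\in\partial\Lambda_T$. Hence a Palais--Smale sequence at a finite level cannot approach the boundary. Boundedness of such a sequence then follows from $(P_1)$ and the second inequality in $(P_3)$: the condition $a<(\pi/T)^2 < (2\pi/T)^2$ makes the quadratic part positive definite on the mean-zero subspace by the Poincar\'e inequality, while the bound on $\langle\nabla W(t,x),x\rangle - 2W(t,x)$ controls the nonlinear terms in the standard Ambrosetti--Rabinowitz manner. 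Combining these two facts yields the local PS condition at every finite level.

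With local PS secured, I would introduce the minimax values
$$c_k = \inf_{A\in\Sigma_k}\;\sup_{x\in A} f(x), \qquad k\ge 1,$$
where $\Sigma_k$ is the family of compact subsets of $\Lambda_T$ of Ljusternik--Schnirelman category at least $k$. The space $\Lambda_T$ is homotopy equivalent to the free loop space of $\mathbb R^n\setminus\{0\}$, which has infinite LS category: for $n=2$ this is visible from the infinitely many connected components indexed by winding number about the origin, and for $n\ge 3$ from the infinite cuplength of $H^*(\Omega S^{n-1})$. Consequently $\Sigma_k\ne\emptyset$ for every $k$, and the standard LS deformation argument, fed by the local PS condition above, upgrades each finite $c_k$ to a critical value of $f$, with the usual multiplicity statement in case of coincidence.

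The main obstacle will be two-fold. First, when $a>0$ the functional $f$ is unbounded below on constants, so one has to pick the minimax classes carefully and verify each $c_k$ is finite; this is precisely where $(P_4)$ and the subquadratic bound $W\le c|x|^\theta$ in $(P_3)$ enter, keeping the supremum finite on suitable test classes. Second, one must show $c_k\to+\infty$, which is what prevents the critical values from piling up at a single level and guarantees infinitely many \emph{distinct} solutions. The natural strategy is to construct test sets $A_k\in\Sigma_k$ supported on loops of increasing topological complexity (e.g.\ $k$-fold coverings of a small loop around the singularity), so that the kinetic contribution $\tfrac12\int|\dot x|^2$ grows with $k$ and, thanks to $\theta<2$, dominates the potential terms; turning this heuristic into a rigorous lower bound on $c_k$ is the most delicate step of the proof, after which one immediately obtains the desired infinite family of $T$-periodic non-collision solutions.
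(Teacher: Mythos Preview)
This theorem is quoted from Majer~[18] and not proved in the paper itself, but the abstract engine behind it is recorded as Lemma~2.6, and it is \emph{not} the classical ladder $c_k=\inf_{\operatorname{cat}A\ge k}\sup_A f$ that you propose. Majer's scheme introduces an auxiliary $C^1$ comparison functional $g$ on $\Lambda$ and requires only (a) finite category of $\{f\le g\}$, (b) Palais--Smale \emph{restricted to} $\{f\ge g\}$, and (c) the derivative comparison $\beta\|f'(u)\|\ge\|g'(u)\|$ on $\{f=g\ge\lambda_0\}$ for some $\beta\in(0,1)$. The paper's own Lemma~3.6 shows concretely how such a $g$ is chosen (there $g(u)=\gamma\|u\|^{\theta+2}$) in the analogous $N$-body problem, and the same template is what underlies Theorem~1.2.

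Your outline has a genuine gap precisely where the naive minimax meets the unboundedness of $f$ from below. For $a>0$ and $W\le b$, on constant loops $x(t)\equiv c\in\mathbb{R}^n\setminus\{0\}$ one has $f(c)=-\tfrac{aT}{2}|c|^{2}-\int_0^T W(t,c)\,dt\to-\infty$ as $|c|\to\infty$, so already $c_1=\inf_{\Lambda_T}f=-\infty$ and the sequence $(c_k)$ never gets off the ground. The same constant direction also undermines your Palais--Smale argument: the quadratic part $\tfrac12\int|\dot x|^2-\tfrac{a}{2}\int|x|^2$ is coercive only on the mean-zero subspace (that is all Poincar\'e gives), while $(P_3)$--$(P_4)$ bound $W$ and $\langle\nabla W,x\rangle-2W$ but not $\nabla W$ itself, so testing $f'(x_n)$ against constants does not control the mean of $x_n$. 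You acknowledge that ``the minimax classes must be picked carefully'' but supply no mechanism for doing so; Majer's auxiliary functional $g$ is exactly that mechanism, replacing global PS and a global lower bound on the $c_k$ by the localized conditions (iv)--(vi) of Lemma~2.6. Without an analogue of $g$, the argument you sketch does not close.
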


After Majer , Zhang-Zhou $([27])$ studied  a class of N-Body
problems. They considered the following system :
$$
m_i\ddot{x}(t)+\nabla_{x_i}V(t,x_1(t),\cdots,x_N(t))=0,\, x_i\in
\mathbb{R}^n,\,i=1,\cdots,N \eqno{(1.2)}
$$
where $m_i>0$ for all $i$, and $V$ satisfies the following
conditions :

$\quad(V1). V(t, x_1,\cdots,x_N)=\dfrac{1}{2}\underset{1\leq i\neq j
\leq N}{\sum}V_{ij}(t,x_i-x_j);$

$\quad(V2). V_{ij}\in C^2(\mathbb{R}\times
(\mathbb{R}^n\setminus\{0\});\mathbb{R}),$ for all $1\leq i\neq
j\leq N $;

$\quad(V3). V_{ij}(t,\xi) \rightarrow -\infty $ uniformly on $t$ as
$|\xi|\rightarrow 0$,\,\,for all $1\leq i\neq j\leq N $;

$\quad(V4). V_{ij}(t,\xi)\leq 0$,\quad for all $t\in \mathbb{R},\,
\xi\in \mathbb{R}^n\setminus\{0\}$;

$\quad(V5). V_{ij}$ satisfies Gordon's strong force condition;

$\quad(V6)$. There exists an element $g$ of finite order $s$ in
$SO(k)$ which has no fixed point other than origin (i.e. 1 is not an
eigenvalue of g), such that
$$
V(t, x_1,\cdots, x_N)=V(t+T/s,
gx_1,\cdots,gx_N),\,\forall\,t\in[0,1] ,\,x_i\in \mathbb{R}^n.
$$
If the potential $V$ satisfies $(V1)-(V6)$, and
$x(t)=(x_1(t),\cdots,x_N(t))$ is a T-periodic non-collision solution
of system  $(1.2)$ and satisfies
$x(t+T/s)=(x_1(t+T/s),\cdots,x_N(T+t/s))=(gx_1(t),\cdots,gx_N(t))=gx(t)$,
we
say that $x(t)$ is a g-symmetric T-periodic non-collision solutions
(we also call $(g,T)$ is a non-collision solution for short).

 Zhang-Zhou obtained the following theorem:
\begin{thm}
Suppose the potential $V$ satisfies $(V1)-(V6)$ and $T$ is any
positive  real number. Then system $(1.2)$ has infinitely many
$(g,T)$ non-collision solutions.
\end{thm}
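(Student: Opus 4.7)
The plan is to mirror Majer's Ljusternik-Schnirelman scheme from Theorem 1.2, adapted to the $N$-body configuration space and to the finite cyclic symmetry supplied by $(V6)$. Let
\[
E_g = \bigl\{ x=(x_1,\ldots,x_N)\in H^1(\mathbb{R}/T\mathbb{Z},\mathbb{R}^{nN}) : x(t+T/s)=g\,x(t) \text{ a.e.}\bigr\},
\]
and take as configuration space the open subset
\[
\Lambda = \bigl\{ x\in E_g : x_i(t)\neq x_j(t)\text{ for all }i\neq j,\ t\in[0,T]\bigr\}.
\]
Define the action
\[
f(x)=\tfrac12\sum_{i=1}^{N} m_i\int_0^T |\dot x_i|^2\,dt - \int_0^T V(t,x(t))\,dt.
\]
Since $1$ is not an eigenvalue of $g$, the only constants in $E_g$ are trivial; a Poincaré/Wirtinger inequality on $E_g$ therefore bounds the full $H^1$ norm by $\|\dot x\|_{L^2}$. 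By $(V4)$ one has $f(x)\geq \tfrac12\sum m_i\|\dot x_i\|_2^2$, so sublevel sets of $f$ are $H^1$-bounded.

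Next I would verify the local Palais-Smale condition on $\Lambda$. Let $(x^{(n)})\subset\Lambda$ satisfy $f(x^{(n)})\to c$ and $f'(x^{(n)})\to 0$. The previous step yields a weakly convergent subsequence, $x^{(n)}\rightharpoonup x^\star$ in $E_g$. If $x^\star\in\partial\Lambda$ then some difference $x_i^\star-x_j^\star$ vanishes at some $t_0$; applying Lemma 1.1 to the strong-force pair potential $V_{ij}$ (legitimate by $(V1),(V2),(V5)$) forces $\int_0^T V(t,x^{(n)})\,dt\to -\infty$, contradicting boundedness of $f(x^{(n)})$. Hence $x^\star\in\Lambda$, and using $f'(x^{(n)})\to 0$ on the open set $\Lambda$ a standard argument upgrades weak to strong $H^1$-convergence. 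This is exactly the local PS condition needed for the deformation lemma on $\Lambda$.

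I would then apply equivariant Ljusternik-Schnirelman theory. A natural symmetry to exploit is time-shift by $T/s$, which preserves $f$ thanks to $(V6)$ and generates a free $\mathbb{Z}_s$-action on $\Lambda$. Setting
\[
c_k=\inf\bigl\{\sup_{x\in A} f(x): A\subset\Lambda\text{ closed and equivariant with genus}\geq k\bigr\},
\]
the coercivity of $-\int V$ near $\partial\Lambda$ (Gordon) together with local PS guarantees that each finite $c_k$ is an equivariant critical value of $f$ on $\Lambda$, and that distinct values yield geometrically distinct critical orbits. Each such critical point is a classical $g$-symmetric $T$-periodic non-collision solution of $(1.2)$; it cannot be constant, because a constant $x_i$ would satisfy $gx_i=x_i$, forcing $x_i=0$ for every $i$ and hence a collision, which is excluded on $\Lambda$.

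The main obstacle I foresee is showing that $c_k\to+\infty$, so that infinitely many geometrically distinct levels actually occur. Local PS and Gordon's coercivity give that each $c_k$ is either critical or infinite, but infiniteness of the sequence requires a genuine topological input: one must bound the genus of sublevel sets $\{f\leq M\}\cap\Lambda$ and check that these bounds tend to infinity with $M$. This rests on computing (or at least estimating) the equivariant topology of $\Lambda$, which is non-trivial because the collision strata $\{x_i=x_j\}$ interact in a complicated way with the $\mathbb{Z}_s$-symmetry; this, rather than the analytical PS step, is the real content of the theorem.
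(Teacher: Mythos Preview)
Your variational setup, the coercivity from $(V4)$, and the Palais--Smale verification are all correct and match the approach the paper inherits from Zhang--Zhou~[27]. The divergence is in the index theory you propose for the multiplicity step, and this is where the plan stops short of the actual content of the theorem.

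The paper (via~[27]) does \emph{not} use an equivariant $\mathbb{Z}_s$-genus. It uses Ljusternik--Schnirelman \emph{category} directly on the open set $\Lambda$, with no symmetry of $f$ required. The two topological inputs are: (a)~$Cat_{\Lambda}(\Lambda)=+\infty$, which comes from the Fadell--Husseini cup-length computation for free loop spaces of configuration spaces (quoted here as Lemma~3.5, from Corollary~3.4 of~[27]); and (b)~each sublevel set $\{f\le\lambda\}$ has \emph{finite} category in $\Lambda$. Step~(b) is the piece you do not mention: one first shows $\|u\|\le k(\lambda)\,p(u)$ on $\{f\le\lambda\}$ (Lemma~3.2), and then uses an explicit convolution homotopy to deform this set inside $\Lambda$ onto a compact set (Lemma~3.3, which the paper says is ``almost the same as the proof of Lemma~4.3 of~[27]''). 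With (a) and (b) in hand, Majer's abstract Lemma~2.6 (or even plain LS theory, since here $V\le 0$ makes $f$ coercive and PS global) yields critical values $c_k\to+\infty$. Your $\mathbb{Z}_s$-genus route is not incorrect in principle, but it transfers the difficulty to proving that the $\mathbb{Z}_s$-genus of $\Lambda$ is infinite, which is less readily available than the Fadell--Husseini category result; the ``obstacle'' you flag is exactly what the category approach resolves off the shelf. (Small correction to your last paragraph: what is needed is that sublevel sets have \emph{finite} index while $\Lambda$ has \emph{infinite} index, not that the sublevel bounds tend to infinity with~$M$.)
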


Motivated by P.Majer, Zhang S.Q.- Zhou Q. 's work , we consider the
following system:
$$
\left\{
\begin{array}{cc}
 &m_i\ddot{u}_i + \nabla_{u_i} V(u_1,\ldots,u_N) =0,\quad(1\leq i\leq N),\quad (Ph.1) \\
 &\dfrac{1}{2}\sum m_i|\dot{u}_i(t)|^2
 +V\left(u_1(t),\ldots,u_N(t)\right)=h.\quad \qquad (Ph.2)
\end{array}
\right.\leqno{(Ph)}
$$
We have the following theorem :
\begin{thm}
Suppose  $V( u_1,\cdots,u_N)=\dfrac{1}{2}\underset{1\leq i\neq j
\leq N}{\sum}V_{ij}(u_i-u_j),\,\, V_{ij}(\xi)\in
C^1(\mathbb{R}^n\setminus\{0\},\mathbb{R})$, and satisfies

$\quad(V5').V_{ij}(\xi)\leq 0,\forall\, \xi\in
\mathbb{R}^n\setminus\{0\}$;

$\quad(V6').\exists\, \alpha>2$ and $r_1>0$ such that $\langle\nabla
V_{ij}(\xi),\xi\rangle \geq -\alpha V_{ij}(\xi),\forall 0<|\xi|\leq
r_1;$

$\quad(V7').\exists\, c \geq 0, -2 <\theta<0, r_2>r_1$, such that
when $|\xi|\geq r_2 $, there holds
$$
\langle\nabla_\xi V_{ij}(\xi), \xi \rangle\leq c |\xi|^\theta;
$$

$\quad(V8').V_{ij}(\xi)=V_{ji}(\xi),\forall \,\xi\in
\mathbb{R}^n\setminus\{0\}. $

Then  for any $h> 0$, the system $(Ph)$ has infinitely many
non-constant  and  non-collision periodic solutions .
\end{thm}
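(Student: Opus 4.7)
The plan is to apply the Ljusternik--Schnirelman machinery of Majer (as extended by Zhang--Zhou) to the Jacobi--Maupertuis functional attached to the fixed energy $h$. Set $M=\mathrm{diag}(m_1 I,\ldots,m_N I)$ and, on the non-collision loop space
\[
\Lambda=\Bigl\{u\in H^1(\mathbb{R}/\mathbb{Z},(\mathbb{R}^n)^N):\ u_i(t)\neq u_j(t)\ \forall\,i\neq j,\ \forall t\Bigr\},
\]
consider
\[
f(u)=\frac{1}{2}\Bigl(\int_0^1\langle M\dot u,\dot u\rangle\,dt\Bigr)\Bigl(\int_0^1(h-V(u))\,dt\Bigr).
\]
A classical computation shows that positive critical points of $f$, after the time rescaling $t\mapsto T\,t$ with $T^2=\int_0^1\langle M\dot u,\dot u\rangle/\int_0^1(h-V(u))$, are exactly the $T$-periodic solutions of $(Ph)$ with energy $h$. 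Since $h>0$ and $V\le 0$ by $(V5')$, the second factor is strictly positive on $\Lambda$, so the functional is well-defined and smooth on $\Lambda$ and coercive in the kinetic direction.

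The first step is to deduce Gordon's strong force from $(V6')$: integrating $\langle\nabla V_{ij}(\xi),\xi\rangle\ge -\alpha V_{ij}(\xi)$ along rays with $\alpha>2$ gives $-V_{ij}(\xi)\ge c|\xi|^{-\alpha}$ on a small punctured neighbourhood of $0$, and $U_{ij}(\xi)=-c'|\xi|^{1-\alpha/2}$ then satisfies Gordon's Lemma 1.1. Consequently, whenever $u_n\rightharpoonup u\in\partial\Lambda$ one has $\int_0^1 V(u_n)\,dt\to -\infty$, and therefore any sequence $(u_n)$ with $f(u_n)$ bounded remains at a positive distance (depending on the bound) from the collision set $\partial\Lambda$. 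The upper bound $(V7')$ with $-2<\theta<0$, combined with $h>0$, controls the behaviour of $f$ at infinity and, together with the Poincaré--Wirtinger inequality and the coercivity of the kinetic factor, yields the local Palais--Smale condition on sublevels $\{f\le c\}$ in $\Lambda$: a $(PS)_c$ sequence cannot escape to collisions (Gordon) nor blow up at infinity (growth $(V7')$), hence has a convergent subsequence in $\Lambda$.

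Next, I would produce an unbounded sequence of min--max critical values. The loop space $\Lambda$ has rich topology inherited from the configuration space of $N$ ordered points in $\mathbb{R}^n$ minus the fat diagonal. One selects an increasing sequence $\mathcal F_1\subset\mathcal F_2\subset\cdots$ of admissible families (for instance, sets of $k$--dimensional compact subsets $A\subset\Lambda$ of LS--category $\ge k$, stable under the deformation semiflow generated by $-\nabla f$), and puts
\[
c_k=\inf_{A\in\mathcal F_k}\sup_{u\in A}f(u).
\]
The deformation lemma combined with the local Palais--Smale property of Step 2 guarantees that each $c_k$ is a critical value of $f$ in $\Lambda$, and an upper bound of the form $c_k\le C\,k^{\mu}$ together with a lower bound using the growth of $\int_0^1(h-V)\,dt$ on families of increasing complexity forces $c_k\to+\infty$ as $k\to\infty$. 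Distinct values $c_k$ produce geometrically distinct critical orbits.

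Finally, each such critical point is non-constant and non-collision: non-collision because $u\in\Lambda$ (ensured by Gordon's lemma throughout the deformation), and non-constant because $f(u)=c_k>0$ together with $h>0$ forces $\int_0^1\langle M\dot u,\dot u\rangle\,dt>0$. Rescaling time yields a non-constant, non-collision periodic solution of $(Ph)$ for every $k$, hence infinitely many such solutions.

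The main obstacle I anticipate is the construction and lower bound for the min--max levels $c_k$: one must exhibit enough topologically nontrivial families in $\Lambda$ and then show that the suprema of $f$ on those families are forced to diverge, despite the sign condition $V\le 0$ which a priori keeps the potential part of $f$ bounded. Handling this, together with the bookkeeping needed to verify that the deformation flow preserves the chosen families in the presence of the singularity at collisions, is the technical heart of the argument; the strong force condition derived from $(V6')$ and the controlled growth in $(V7')$ are precisely what is needed to make both ingredients work.
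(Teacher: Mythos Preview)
Your overall strategy---Jacobi--Maupertuis functional, strong force derived from $(V6')$, Ljusternik--Schnirelman theory---is the paper's strategy, but there are two genuine gaps.

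First, you work on the full non-collision loop space $\Lambda$, while the paper restricts to the antipodal subspace $\Lambda_0=\{u\in\Lambda: u_i(t+\tfrac12)=-u_i(t)\}$. This is essential, not cosmetic. On your $\Lambda$ the functional $f$ is invariant under diagonal translations $u\mapsto u+(c,\ldots,c)$, so Palais--Smale fails outright; moreover $\int_0^1 u_i\,dt$ need not vanish, so the Poincar\'e--Wirtinger inequality you invoke is unavailable and $\|\dot u\|_{L^2}$ is not an equivalent norm. The antipodal constraint kills translations, forces mean zero, and still yields genuine solutions via Palais' symmetric criticality (Lemma~2.2). The paper also relies on it (through Zhang--Zhou) to get $\mathrm{Cat}_{\Lambda_0}(\Lambda_0)=+\infty$ and the finite-category estimates of Lemmas~3.2--3.3.

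Second, the step you flag as the ``main obstacle'' is exactly where the substance of the paper lies, and it is not handled by the classical min--max scheme you sketch (your claimed bounds on $c_k$ are not supported). The paper instead applies Majer's abstract result (Lemma~2.6), which requires a comparison functional $g$: one takes $g(u)=\gamma\|u\|^{\theta+2}$ with $\beta>\tfrac{\theta+2}{2}$ and $\gamma$ chosen so that $\beta\bigl[2\gamma-N(N-1)c\,2^{\theta-2}k_\infty^\theta\bigr]>\gamma(\theta+2)$. Then $\{f\le g\}\subset\{f-\varepsilon\|u\|^2\le M\}$ has finite category (Lemmas~3.2--3.3), PS holds on $\{f\ge g\}$ (Lemma~3.4), and condition~(vi), namely $\beta\|f'(u)\|\ge\|g'(u)\|$ on $\{f=g\ge\lambda_0\}$, follows from $(V7')$ once one knows $|u_i-u_j|\ge p(u)\ge r_2$ there. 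This comparison-functional mechanism is what forces the critical values to diverge; without it your outline does not close.
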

\begin{example}
We take
$$
V_{ij}(\xi)=\left\{
\begin{array}{cc}
-\dfrac{1}{|\xi|^\alpha},\ \ & \ \ {\rm if} \ \ 0<|\xi|\leq r_1 ,\\
smooth \,\,connecting ,\ \ & \ \ {\rm if} \ \ r_1<|\xi|<r_2 , \\
-|\xi|^\theta ,\ \ & \ \ {\rm if} \ \ |\xi|\geq r_2>r_1.
\end{array}
\right.
$$
\end{example}

\section{ Some Lemmas}
In this section, we collect some known  lemmas which are necessary
for  the proof of Theorem 1.4.

Let us introduce the following notations:
\begin{align}
&m^*=\min{\{m_1,\cdots,m_N\}}\:;\qquad  H^1=W^{1,2}(\mathbb{R}/\mathbb{Z},\mathbb{R}^n).     \notag    \\
&E=\{u=(u_1,\ldots,u_N)\mid \ u_i\in H^1,\, u_i(t+\dfrac{1}{2})=-u_i(t)\}.    \notag \\
&\Lambda_0=\{u\in E\mid u_i(t)\neq
u_j(t),\:\forall\,t,\,\forall\,i\neq j \}.   \notag \\
&\partial\Lambda_0=\{u\in E \mid \exists\, t_0,1\leq i_0\neq j_0\leq
N\,s.t.\,u_{i_0}(t_0)=u_{j_0}(t_0)\}. \notag   \\
&p(u)=\underset{1\leq i\neq j\leq N,\,t\in [0,1]
}{\min{}}|u_i(t)-u_j((t)|   \notag.\\
&\{f\leq c\}=\{ u\in \Lambda_0,\,f(u)\leq c \}.    \notag
\end{align}

\begin{lem} \label{2.1}
$([1]-[4])$\quad Let
$f(u)=\dfrac{1}{2}\int_0^1\sum\limits_{i=1}^{N}m_i|u_i|^2
\,\mathrm{d}t\int_0^1(h-V(u))\, \mathrm{d}t$ and $\tilde{u}\in
\Lambda_{0}$ satisfy $f'\left(\tilde{u}\right)=0$ and
$f\left(\tilde{u}\right)>0$. Set
$$
\dfrac{1}{T^2}=\dfrac{\int_0^1\left(h-V\left(\tilde{u}\right)\right)\,
\mathrm{d}t}
{\dfrac{1}{2}\int_0^1\sum\limits_{i=1}^Nm_i|\dot{\tilde{u}}_i|^2
\,\mathrm{d}t} .                                    \eqno{(2.1)}
$$
Then $\tilde{q}(t)=\tilde{u}\left(t/T\right)$ is a non-constant
$T$-periodic solution for (Ph).
\end{lem}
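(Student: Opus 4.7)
The plan is to show that after the time-rescaling $\tilde q(t)=\tilde u(t/T)$, the Euler--Lagrange equation of the product functional $f$ becomes both the Newtonian system (Ph.1) and the energy identity (Ph.2). Writing $f=A\cdot B$ with $A(u)=\tfrac{1}{2}\int_0^1\sum_{i=1}^N m_i|\dot u_i|^2\,dt$ and $B(u)=\int_0^1(h-V(u))\,dt$, the first step is to compute $f'(\tilde u)\cdot v$ by the product rule. Integration by parts in the $A$-term is legitimate because $u_i(t+1/2)=-u_i(t)$ enforces $1$-periodicity, and it converts $f'(\tilde u)=0$ into the pointwise Euler--Lagrange identity $B(\tilde u)\,m_i\ddot{\tilde u}_i+A(\tilde u)\,\nabla_{u_i}V(\tilde u)=0$ for each $i$, valid in $\Lambda_0$ where $V$ is smooth.

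Next I would check that $T^2=A(\tilde u)/B(\tilde u)$ is a well-defined positive real number: under $(V5')$ and $h>0$ one has $B(\tilde u)\ge h>0$, while $f(\tilde u)>0$ then forces $A(\tilde u)>0$. Substituting $\tilde q(t)=\tilde u(t/T)$ and using $\ddot{\tilde q}_i(t)=T^{-2}\ddot{\tilde u}_i(t/T)$ in the Euler--Lagrange identity, the factor $A/B=T^2$ cancels cleanly and yields $m_i\ddot{\tilde q}_i+\nabla_{u_i}V(\tilde q)=0$, which is exactly (Ph.1); the period of $\tilde q$ is $T$ because $\tilde u$ has period $1$.

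To establish (Ph.2), I would exploit that (Ph.1) already implies the total energy $E(t)=\tfrac{1}{2}\sum_i m_i|\dot{\tilde q}_i(t)|^2+V(\tilde q(t))$ is constant in $t$, so it suffices to evaluate its time-average. Changing variables $s=t/T$ yields $\int_0^T\tfrac{1}{2}\sum_i m_i|\dot{\tilde q}_i|^2\,dt=A(\tilde u)/T$ and $\int_0^T V(\tilde q)\,dt=T\int_0^1 V(\tilde u)\,ds=T(h-B(\tilde u))$, so $\int_0^T E\,dt=A(\tilde u)/T+Th-TB(\tilde u)$. The defining relation $T^2=A/B$ gives $A/T=TB$, making the two outer terms cancel and leaving $\int_0^T E\,dt=Th$; constancy of $E$ then forces $E\equiv h$. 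Non-collision of $\tilde q$ transfers immediately from $\tilde u\in\Lambda_0$, and non-constancy follows because a constant $\tilde u$ would yield $A(\tilde u)=0$ and hence $f(\tilde u)=0$, contradicting the hypothesis.

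The whole argument is essentially Jacobi's reparametrization trick, so no serious analytic difficulty arises. The only delicate point is the algebraic cancellation $A/T=TB$, which shows that the specific choice $T=\sqrt{A/B}$ is the unique rescaling making both (Ph.1) and the prescribed energy value $h$ hold simultaneously.
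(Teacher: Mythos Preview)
The paper does not supply its own proof of this lemma; it is quoted from references [1]--[4] (Ambrosetti and Coti Zelati), so there is nothing to compare against directly.  Your argument is the standard one from those sources and is correct: writing $f=A\cdot B$, deriving the pointwise Euler--Lagrange relation $B\,m_i\ddot{\tilde u}_i+A\,\nabla_{u_i}V(\tilde u)=0$, and then rescaling by $T^2=A/B$ produces (Ph.1), while the averaging computation together with energy conservation yields (Ph.2).

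Two small remarks.  First, to justify $A(\tilde u)>0$ and $B(\tilde u)>0$ you invoke $(V5')$ and $h>0$; strictly speaking the lemma as stated does not assume these, but $f(\tilde u)=A\cdot B>0$ already forces $A$ and $B$ to have the same sign, and in the setting of the present paper both are indeed positive, so the argument goes through.  Second, in the references the energy identity is often obtained slightly more directly by multiplying the Euler--Lagrange relation by $\dot{\tilde u}_i$, summing over $i$, and integrating to get $B\cdot\tfrac{1}{2}\sum m_i|\dot{\tilde u}_i(t)|^2+A\,V(\tilde u(t))=Ah$, from which (Ph.2) follows upon rescaling; your averaging route is equivalent and equally valid.
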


\begin{lem} \label{2.2} $(Palais[20])$ \quad Let $\sigma$ be an orthogonal
representation of a finite or compact group $G$ in the real Hilbert
space $H$ such that for $\forall\:\sigma\in \, G$, $$f(\sigma \cdot
x)=f(x),$$ where $f\in C^1(H,\mathbb{R})$.

Let $S=\{x\in H \mid \sigma x=x,\:\forall\: \sigma\ in \ G \}$, then
the critical point of $f$ in $S$ is also a critical point of $f$ in
$H$.

\end{lem}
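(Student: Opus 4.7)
The plan is to exploit $G$-equivariance of the gradient $\nabla f$ together with the fact that $S$ is a closed linear subspace of $H$. Since each $\sigma \in G$ acts linearly on $H$ (it is a bounded linear operator, being part of an orthogonal representation), the fixed-point set $S$ is a closed linear subspace, so its tangent space at any point $x \in S$ coincides with $S$ itself.

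First I would establish that $\nabla f$ is $G$-equivariant. Differentiating the invariance identity $f(\sigma x) = f(x)$ in a direction $h \in H$ gives $\langle \nabla f(\sigma x), \sigma h \rangle = \langle \nabla f(x), h \rangle$. Since $\sigma$ is orthogonal, so that $\sigma^{\ast} = \sigma^{-1}$, this rewrites as $\langle \sigma^{-1}\nabla f(\sigma x), h \rangle = \langle \nabla f(x), h \rangle$ for every $h \in H$, whence $\nabla f(\sigma x) = \sigma \nabla f(x)$ for every $\sigma \in G$ and every $x \in H$.

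Now let $x \in S$ be a critical point of $f$ restricted to $S$. On one hand, $\sigma x = x$ for every $\sigma \in G$, so $\nabla f(x) = \nabla f(\sigma x) = \sigma \nabla f(x)$; hence $\nabla f(x)$ is itself fixed by every $\sigma$, i.e., $\nabla f(x) \in S$. On the other hand, the hypothesis that $x$ is a critical point of $f|_S$ combined with $T_x S = S$ gives $\langle \nabla f(x), v \rangle = 0$ for every $v \in S$. Choosing the admissible direction $v = \nabla f(x) \in S$ then yields $\|\nabla f(x)\|^2 = 0$, so $\nabla f(x) = 0$, and $x$ is a critical point of $f$ on all of $H$.

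The only delicate step is the equivariance computation, where the orthogonality of the representation (so that the adjoint equals the inverse) is what converts the chain-rule identity into a genuine intertwining of $\nabla f$ with the $G$-action. Compactness or finiteness of $G$ is not invoked in the argument as written, since the equivariance $\nabla f(\sigma x) = \sigma \nabla f(x)$ is a pointwise statement for each individual $\sigma$; these hypotheses would become relevant only if one wished to build an explicit equivariant projection onto $S$ by averaging.
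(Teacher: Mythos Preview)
Your proof is correct. Note, however, that the paper does not supply its own proof of this lemma: it is simply quoted as a known result with the attribution ``(Palais~[20])'', so there is nothing in the paper to compare your argument against. What you have written is the standard short proof of the principle of symmetric criticality in the Hilbert-space/orthogonal-representation setting: equivariance of $\nabla f$ forces $\nabla f(x)\in S$ whenever $x\in S$, and then criticality of $f|_S$ at $x$ kills $\nabla f(x)$ by testing against itself. Your closing remark is also accurate: in this linear orthogonal setting the compactness/finiteness of $G$ is not actually used, and would only enter if one needed an averaging construction (as in Palais's more general manifold version).
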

By Lemma 2.1-2.2, we have
\begin{lem} \label{2.3}$([1]-[4])$\quad Assume $V_{ij}\in C^1(\mathbb{R}^n\setminus\{0\},\mathbb{R})$
satisfies $(V8').$ If $ \bar{u} \in \Lambda_0$ is a critical point
of $f(u)$ and $f(\bar{u})>0$, then $\bar{q}(t)=\bar{u}(t/T)$ is a
non-constant $T$-periodic
solution of (Ph). \\[4pt]
\end{lem}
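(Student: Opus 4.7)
The plan is to obtain the conclusion by combining Lemma~2.1 with the principle of symmetric criticality (Lemma~2.2). Lemma~2.1 already produces, from any critical point of the Jacobi--Maupertuis functional $f$ on the full (unconstrained) non-collision loop space whose critical value is positive, a non-constant $T$-periodic solution of $(Ph)$ with $T$ given by (2.1). The only gap is that $\bar u$ is given merely as a critical point of $f$ restricted to the antiperiodic subspace $\Lambda_0 \subset E$, so my task is to upgrade this restricted criticality to criticality on the ambient space $H^1(\mathbb{R}/\mathbb{Z},\mathbb{R}^n)^N$.

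I consider the order-two group $G=\{\mathrm{id},\sigma\}$ acting orthogonally on $H^1(\mathbb{R}/\mathbb{Z},\mathbb{R}^n)^N$ by $(\sigma u)_i(t)=-u_i(t+\tfrac12)$. The fixed-point subspace of this action is precisely $E$. To apply Lemma~2.2 I verify that $f\circ\sigma=f$. Invariance of the kinetic factor $\tfrac12\int_0^1\sum m_i|\dot u_i|^2\,\mathrm dt$ is immediate, since $(\sigma u)_i(t)=-u_i(t+\tfrac12)$ transforms the integrand only by an overall sign (which squares to one) and a time shift (absorbed by periodic integration). For the potential, note
\[
(\sigma u)_i(t)-(\sigma u)_j(t) \;=\; -\bigl(u_i(t+\tfrac12)-u_j(t+\tfrac12)\bigr) \;=\; u_j(t+\tfrac12)-u_i(t+\tfrac12);
\]
relabeling the dummy pair $(i,j)\mapsto(j,i)$ in the double sum defining $V$ and then invoking $(V8')$ yields
\[
V(\sigma u)(t)\;=\;\tfrac12\sum_{i\neq j}V_{ji}\bigl(u_i(t+\tfrac12)-u_j(t+\tfrac12)\bigr)\;=\;V\bigl(u(t+\tfrac12)\bigr),
\]
whence by $1$-periodicity $\int_0^1 V(\sigma u)\,\mathrm dt=\int_0^1 V(u)\,\mathrm dt$. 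Therefore $f$ is $G$-invariant.

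Lemma~2.2 then promotes $\bar u$ to a critical point of $f$ on the full non-collision loop space. Since $f(\bar u)>0$, Lemma~2.1 applies with $\tilde u=\bar u$: the positive number $T$ defined by (2.1) is well-defined, and $\bar q(t)=\bar u(t/T)$ is the desired non-constant $T$-periodic solution of $(Ph)$.

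The main obstacle is purely notational, namely seeing that $V$ is $\sigma$-invariant even though $V_{ij}$ need not individually be an even function: the antipodal reflection sends the pairwise difference $u_i-u_j$ to $u_j-u_i$, and this negation is undone by swapping the dummy indices and invoking the symmetry $V_{ij}=V_{ji}$ of $(V8')$. Everything else is a direct application of the two cited lemmas.
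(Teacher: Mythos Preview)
Your proof is correct and follows exactly the route the paper indicates: the paper simply writes ``By Lemma 2.1--2.2, we have'' before stating Lemma~2.3, and you have filled in precisely those details---defining the order-two action $(\sigma u)_i(t)=-u_i(t+\tfrac12)$ with fixed-point set $E$, verifying $G$-invariance of $f$ (where $(V8')$ is used to handle the sign flip in the pairwise differences via the index swap), applying Palais' principle to upgrade restricted criticality to full criticality, and then invoking Lemma~2.1. There is nothing to add.
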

%
\begin{lem} \label{2.9} ($[28]$)\quad Let $q \in W^{1,2}(\mathbb{R} / T\mathbb{Z},\mathbb{R}^n) $
and $ \int_0^Tq(t)\,\mathrm{d}t=0 $, then we have \\
$(i)$.\,Poincare-Wirtinger's inequality:
$$
\int_0^T|\dot{q}(t)|^2\,\mathrm{d}t \geq
{\Big(\frac{2\pi}{T}\Big)}^2 \int_0^T|{q}(t)|^2\,\mathrm{d}t.
\eqno{(2.2)}
$$
$(ii)$.\,  Sobolev's inequality:
$$
\underset{0 \leq t \leq T} {\max} |q(t)|
 = |q|_\infty  \leq  \sqrt {\frac{T}{12}}\big(\int_0^T |\dot
 {q}(t)|^2\mathrm{d}t\big)^{1/2}. \eqno{(2.3)}
$$
\end{lem}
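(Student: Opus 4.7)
The plan is to prove both inequalities by expanding $q$ in a Fourier series on the circle $\mathbb{R}/T\mathbb{Z}$. Since $q\in W^{1,2}$, it admits an expansion
$$ q(t)=\sum_{k\in\mathbb{Z}} c_k\, e^{2\pi ikt/T}, \qquad c_k\in\mathbb{C}^n, $$
and the mean-zero hypothesis $\int_0^T q\,dt=0$ is equivalent to $c_0=0$. The Fourier coefficients of $\dot q$ are $(2\pi ik/T)c_k$, so Parseval's identity gives
$$ \int_0^T|q(t)|^2\,dt=T\sum_{k\neq 0}|c_k|^2,\qquad \int_0^T|\dot q(t)|^2\,dt=\frac{4\pi^2}{T}\sum_{k\neq 0}k^2|c_k|^2. $$

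For part (i), I use that $k^2\geq 1$ for every $k\neq 0$, so
$$ \int_0^T|\dot q|^2\,dt\geq \frac{4\pi^2}{T}\sum_{k\neq 0}|c_k|^2=\frac{4\pi^2}{T^2}\int_0^T|q|^2\,dt=\Bigl(\frac{2\pi}{T}\Bigr)^{\!2}\int_0^T|q|^2\,dt, $$
which is exactly (2.2).

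For part (ii), the Sobolev embedding $W^{1,2}(\mathbb{R}/T\mathbb{Z})\hookrightarrow C^0$ allows me to evaluate $q$ pointwise, and the triangle inequality yields $|q(t)|\leq \sum_{k\neq 0}|c_k|$. Writing $|c_k|=\tfrac{1}{|k|}\cdot|k|\,|c_k|$ and applying Cauchy--Schwarz,
$$ \sum_{k\neq 0}|c_k|\leq\Bigl(\sum_{k\neq 0}\frac{1}{k^2}\Bigr)^{\!1/2}\Bigl(\sum_{k\neq 0}k^2|c_k|^2\Bigr)^{\!1/2}. $$
Euler's formula $\sum_{k\neq 0}1/k^2=2\zeta(2)=\pi^2/3$ together with the second Parseval identity $\sum_{k\neq 0}k^2|c_k|^2=\tfrac{T}{4\pi^2}\int_0^T|\dot q|^2\,dt$ combine to give
$$ |q(t)|\leq \sqrt{\frac{\pi^2}{3}\cdot\frac{T}{4\pi^2}}\,\Bigl(\int_0^T|\dot q|^2\,dt\Bigr)^{\!1/2}=\sqrt{\frac{T}{12}}\,\Bigl(\int_0^T|\dot q|^2\,dt\Bigr)^{\!1/2}, $$
and taking supremum over $t$ yields (2.3).

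The argument is essentially routine once Fourier series are introduced; the only real bookkeeping step is nailing down the sharp constant $\sqrt{T/12}$, which forces one to use Euler's identity $\zeta(2)=\pi^2/6$ rather than any cruder bound on $\sum 1/k^2$. Equality in (2.2) is attained by $q(t)=a\cos(2\pi t/T)+b\sin(2\pi t/T)$, confirming that the constant $(2\pi/T)^2$ cannot be improved; in (2.3), sharpness comes from choosing $|c_k|\propto 1/k^2$ with aligned phases at a single point $t$, again confirming the constant $\sqrt{T/12}$.
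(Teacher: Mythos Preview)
Your Fourier-series argument is correct for both parts: Parseval with $c_0=0$ immediately gives (2.2), and the Cauchy--Schwarz step together with $\sum_{k\ne 0}k^{-2}=\pi^2/3$ yields the sharp constant $\sqrt{T/12}$ in (2.3). The sharpness remarks are also right; the extremizer for (2.3) with $c_k\propto k^{-2}$ is, up to normalization, the periodized Bernoulli polynomial $B_2$, i.e.\ the piecewise quadratic $q(t)=\tfrac12(t-T/2)^2-T^2/24$.

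There is nothing to compare against in the paper itself: the lemma is quoted from reference~[28] without proof, so the authors give no argument of their own. Your self-contained Fourier proof is the standard one and would serve perfectly well as a replacement for the bare citation.
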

By the definition of $\Lambda_0$ and Lemma 2.3,  for $ \forall\, u
\in \Lambda_0, (\int_0^1\sum\limits_{i=1}^N m_i|\dot{u}_i|^2
\,\mathrm{d}t)^{1/2}$ is equivalent to the $(H^1)^N=H^1\times \cdots
H^1$ norm:
$$
\parallel u \parallel_{(H^1)^N}=\left(\int_0^1|\dot{u}|^2 \,\mathrm{d}t\right)^{1/2}
+\left(\int_0^1 | u|^2\,\mathrm{d}t\right)^{1/2}.
$$
So we take norm $\parallel u \parallel =(\int_0^1\sum\limits_{i=1}^N
m_i|\dot{u}_i|^2 \,\mathrm{d}t)^{1/2}$.

\begin{lem}
(Coti Zelati$\,[11]$)\quad Let $X=(x_1,\ldots,x_N)\in
\mathbb{R}^n\times\cdots \mathbb{R}^n, x_i\neq x_j,\,\forall\,1\leq
i\neq j\leq N .$ Then
$$
\sum\limits_{1\leq i<j\leq N}\dfrac{m_im_i}{|x_i-x_j|^\alpha}\geq
C_\alpha(m_1,\ldots,m_N) (\sum\limits_{i=1}^N
m_i|x_i|^2)^{-\alpha/2},
$$
where  $
C_\alpha(m_1,\ldots,m_N)\overset{\triangle}{=}C_\alpha=(\sum\limits_{i=1}^N
m_i)^{-\alpha/2}(\sum\limits_{1\leq i<j\leq
N}m_im_j)^{\frac{2+\alpha}{2}}.$
\end{lem}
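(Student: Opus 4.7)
The plan is to interpolate the target quantity $U(x):=\sum_{i<j} m_im_j/|x_i-x_j|^\alpha$ against its ``dual'' $\widetilde U(x):=\sum_{i<j} m_im_j\,|x_i-x_j|^\alpha$ via Cauchy--Schwarz, and then bound $\widetilde U(x)$ in terms of the moment of inertia $I(x):=\sum_i m_i|x_i|^2$. Writing $\sqrt{m_im_j}=\bigl(\sqrt{m_im_j}\,|x_i-x_j|^{-\alpha/2}\bigr)\cdot\bigl(\sqrt{m_im_j}\,|x_i-x_j|^{\alpha/2}\bigr)$ and applying Cauchy--Schwarz gives
$$\Bigl(\sum_{i<j}m_im_j\Bigr)^{2}\le U(x)\cdot\widetilde U(x),$$
so it suffices to establish the upper bound $\widetilde U(x)\le W^{1-\alpha/2}\,M^{\alpha/2}\,I(x)^{\alpha/2}$, where $M=\sum_i m_i$ and $W=\sum_{i<j}m_im_j$. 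Substituting this back produces exactly the claimed constant $C_\alpha=M^{-\alpha/2}W^{(2+\alpha)/2}$ in front of $I(x)^{-\alpha/2}$.

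For the auxiliary bound I would first dispose of the distinguished exponent $\alpha=2$ through the elementary expansion
$$\sum_{i<j}m_im_j\,|x_i-x_j|^{2}=M\,I(x)-\Bigl|\sum_i m_i x_i\Bigr|^{2}\le M\,I(x),$$
obtained by squaring and reindexing. To pass from $\alpha=2$ to a general $\alpha\in(0,2]$, apply Jensen's inequality to the concave function $t\mapsto t^{\alpha/2}$ with probability weights $p_{ij}=m_im_j/W$:
$$\widetilde U(x)=\sum_{i<j}m_im_j\bigl(|x_i-x_j|^{2}\bigr)^{\alpha/2}\le W^{1-\alpha/2}\Bigl(\sum_{i<j}m_im_j|x_i-x_j|^{2}\Bigr)^{\alpha/2}\le W^{1-\alpha/2}M^{\alpha/2}I(x)^{\alpha/2}.$$
Combined with the Cauchy--Schwarz step this closes the argument and, incidentally, identifies the regular simplex configurations as equality cases when the masses are equal.

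The main obstacle is the range of the exponent. Hypothesis $(V6')$ of the main theorem requires $\alpha>2$, which is precisely the regime in which $t\mapsto t^{\alpha/2}$ is convex and Jensen's inequality above reverses direction. In that regime one has to argue differently. A robust alternative, which treats all $\alpha>0$ uniformly, is to exploit homogeneity: both sides of the desired inequality have degree $-\alpha$, so it suffices to minimise $U(x)$ over the compact level set $\{I(x)=1\}$ with the collision set removed. Since $U$ is continuous and blows up at collisions, a minimiser exists, and analysing the Lagrange multiplier equations together with the symmetry of the masses pins the minimiser down to an equilibrium configuration (a regular simplex when the masses are equal) at which the extremal value can be computed and shown to equal $C_\alpha$. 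Extending the explicit identification of the minimum value to arbitrary positive masses is the genuinely delicate step, and is the heart of Coti Zelati's argument in $[11]$.
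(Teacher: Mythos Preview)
The paper does not give its own proof of this lemma; it simply quotes the result from Coti Zelati~[11]. So there is nothing in the paper to compare against, and the question is only whether your argument is sound on its own.

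Your Cauchy--Schwarz + Jensen route is correct for $0<\alpha\le 2$, but, as you yourself point out, the paper uses the lemma precisely in the regime $\alpha>2$ (hypothesis~$(V6')$), where your Jensen step goes the wrong way. Your proposed rescue via compactness and Lagrange multipliers is not really a proof: you assert that the constrained minimum of $U$ on $\{I=1\}$ equals $C_\alpha$, but you neither compute this nor explain why it should hold for arbitrary masses, and you defer the ``genuinely delicate step'' to~[11]. That is the gap.

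The fix is much simpler than either branch of your argument. Drop the Cauchy--Schwarz detour entirely and apply Jensen's inequality directly to the \emph{convex} function $t\mapsto t^{-\alpha/2}$ on $(0,\infty)$, which is convex for every $\alpha>0$. With the same weights $p_{ij}=m_im_j/W$ one gets
\[
U(x)=W\sum_{i<j}p_{ij}\bigl(|x_i-x_j|^2\bigr)^{-\alpha/2}
\ \ge\ W\Bigl(\sum_{i<j}p_{ij}|x_i-x_j|^2\Bigr)^{-\alpha/2}
= W\Bigl(W^{-1}\sum_{i<j}m_im_j|x_i-x_j|^2\Bigr)^{-\alpha/2}.
\]
Combined with your identity $\sum_{i<j}m_im_j|x_i-x_j|^2\le M\,I(x)$ and the fact that $s\mapsto s^{-\alpha/2}$ is decreasing, this yields $U(x)\ge W^{1+\alpha/2}M^{-\alpha/2}I(x)^{-\alpha/2}=C_\alpha\,I(x)^{-\alpha/2}$ for all $\alpha>0$ in one stroke, with no case distinction and no variational analysis.
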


\begin{lem}$([18])$ \quad Let $X$ be a Banach space with norm $\parallel \cdot
\parallel$, $\Lambda$ be an open subset of $X$, and suppose a
functional $f: \Lambda \rightarrow \mathbb{R} $ is given such that
the following conditions hold:

$\quad(i). Cat_{\Lambda}(\Lambda)=+\infty;$

$\quad(ii). f\in C^1(\Lambda)$ and $\forall u_n\rightharpoonup
\partial\Lambda, f(u_n)\rightarrow +\infty;$

$\quad(iii). \forall \lambda \in \mathbb{R}, Cat_{\Lambda}(\{f\leq
\lambda\})< +\infty;$

Suppose in addition that there exist $g\in C^1(\Lambda), \beta\in
(0,1)$ and $\lambda_0\in \mathbb{R}$ such that

$\quad(iv). Cat_{\Lambda}(\{f\leq g\}) < +\infty$;

$\quad(v).$ the PS condition holds in the set $\{f\geq g\};$

$\quad(vi). \beta\parallel f'(u) \parallel \geq \parallel
 g'(u)\parallel, \forall u\in \{f=g \geq \lambda_0\}.$

 Then $f$ has a sequence $\{u_n\}\subset \Lambda$  of critical
 points such that $f(u_n)\rightarrow +\infty $  and $f(u_n)\geq g(u_n)-1.$
\end{lem}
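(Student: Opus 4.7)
The lemma is Majer's abstract Ljusternik--Schnirelman principle with \emph{local} Palais--Smale condition. My plan is the classical minimax scheme, modifying only the deformation step to respect the region $\{f \geq g\}$ where hypothesis $(v)$ applies.

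\textbf{Minimax levels.} Define
\begin{equation*}
c_n = \inf_{A \in \mathcal{A}_n} \sup_{u \in A} f(u), \qquad \mathcal{A}_n = \{A \subset \Lambda : Cat_\Lambda(A) \geq n\}.
\end{equation*}
By $(i)$, $\mathcal{A}_n$ is nonempty for every $n$, so $c_n$ is well defined. If $c_n \leq \lambda$ for infinitely many $n$ and some fixed $\lambda$, one would have sets of arbitrarily large category inside $\{f \leq \lambda\}$, contradicting $(iii)$. Hence $c_n \to +\infty$. Moreover, letting $n_0 = Cat_\Lambda(\{f \leq g\}) < +\infty$ by $(iv)$, the subadditivity of category forces every $A \in \mathcal{A}_n$ with $n > n_0$ to satisfy $Cat_\Lambda(A \cap \{f > g\}) \geq n - n_0$, so minimax sets genuinely enter the region where the local PS hypothesis is available.

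\textbf{Deformation adapted to local PS.} The core of the argument is a deformation lemma: if $c := c_n > \lambda_0 + 1$ is not a critical value of $f$, then for some $\epsilon \in (0,1)$ there is a continuous map $\eta : \Lambda \to \Lambda$, homotopic to the identity through maps into $\Lambda$, such that
\begin{equation*}
\eta\bigl(\{f \leq c + \epsilon\} \cap \{f \geq g - 1\}\bigr) \subset \{f \leq c - \epsilon\} \cap \{f \geq g - 1\}.
\end{equation*}
Since $(v)$ fails outside $\{f \geq g\}$, a global pseudo--gradient flow for $f$ cannot be used. Instead I would construct a locally Lipschitz vector field $W$ on an open neighbourhood of the strip $\{c - 2\epsilon \leq f \leq c + 2\epsilon\} \cap \{f \geq g - 1\}$ satisfying (a) $\langle f'(u), W(u) \rangle \geq \tfrac{1}{2}\|f'(u)\|^2$, so the flow $\dot\eta = -W(\eta)$ drives $f$ down; and (b) near $\{f = g\}$, using $(vi)$, $W$ is tilted so that $\{f \geq g - 1\}$ is positively invariant under the flow. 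The local PS condition together with the absence of critical points at level $c$ gives a uniform lower bound $\|f'\| \geq \delta > 0$ on the strip, which both permits the existence of $W$ and guarantees that the flow runs for a uniform time proportional to $\epsilon/\delta^{2}$.

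\textbf{Contradiction and conclusion.} If $c_n$ were not a critical value, pick $A \in \mathcal{A}_n$ with $\sup_A f < c_n + \epsilon$, restrict to $A \cap \{f \geq g - 1\}$ (still of category $\geq n - n_0$, since $\{f < g - 1\} \subset \{f \leq g\}$), and apply $\eta$ to obtain $A'$ with $\sup_{A'} f \leq c_n - \epsilon$ and $Cat_\Lambda(A') \geq n - n_0$. For $n$ large this contradicts the definition of $c_{n - n_0}$, so each sufficiently large $c_n$ is a critical value. The local PS condition on $\{f \geq g\}$ then produces critical points $u_n \in \Lambda$ with $f(u_n) = c_n \to +\infty$, and the fact that they lie in the invariant set $\{f \geq g - 1\}$ gives $f(u_n) \geq g(u_n) - 1$.

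\textbf{Main obstacle.} The principal technical step is the construction of $W$. Condition $(vi)$ with $\beta < 1$ is precisely the transversality statement that on $\{f = g\} \cap \{f \geq \lambda_0\}$ the gradients $f'$ and $g'$ are sufficiently non--parallel: one can combine a negative pseudo--gradient of $f$ in the interior $\{f > g\}$ with a vector field near the boundary whose inner product with $f'$ remains bounded below by a multiple of $\|f'\|^2$ while its inner product with $(f-g)'$ is strictly negative. Patching these via a partition of unity yields the required locally Lipschitz $W$. Verifying that the resulting $\eta$ is genuinely homotopic to the identity within $\Lambda$ and that the category bookkeeping absorbs the finite loss $n_0$ constitutes the bulk of the remaining work.
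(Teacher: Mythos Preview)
The paper does not prove this lemma; it is quoted from Majer~[18] and invoked as a black box in Section~3. There is therefore no in-paper argument to compare against, and what you have written is a sketch of Majer's original proof rather than an alternative to anything in the present paper.

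Your overall architecture---minimax values $c_n$ over sets of category $\geq n$, a deformation confined to the region where the local Palais--Smale hypothesis $(v)$ applies, and the use of $(vi)$ to keep the flow from escaping that region---is the right one. However, the contradiction step as you have written it does not close. From $A'\in\mathcal{A}_{n-n_0}$ with $\sup_{A'}f\leq c_n-\epsilon$ you conclude only that $c_{n-n_0}\leq c_n-\epsilon$, and since $c_{n-n_0}\leq c_n$ always holds this is not a contradiction at all. The loss of $n_0$ in category when you intersect with $\{f\geq g-1\}$ is precisely what breaks the standard Ljusternik--Schnirelman contradiction, which requires the deformed set to remain in $\mathcal{A}_n$ for the \emph{same} $n$. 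The usual repair is to build the minimax classes inside the good region from the start: set $\mathcal{A}_n'=\{A\subset\{f\geq g-1\}\text{ closed}:Cat_\Lambda(A)\geq n\}$, verify via $(iv)$ and subadditivity of category that $\mathcal{A}_n'$ is nonempty for every $n$, define $c_n'=\inf_{A\in\mathcal{A}_n'}\sup_A f$, and then your adapted deformation preserves $\mathcal{A}_n'$ and gives $c_n'\leq c_n'-\epsilon$, which \emph{is} a contradiction. Your final paragraph alludes to this bookkeeping, but the explicit contradiction asserted in your third paragraph is not valid as stated.
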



\section{The Proof of Theorem 1.4}

\begin{lem}
\quad Let $\{u_n\} \subset \Lambda_0$ and $u_n\rightharpoonup u\in
\partial\Lambda_0. $ Then $f(u_n)\rightarrow +\infty.$
\end{lem}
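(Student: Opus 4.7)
The strategy is to reduce to Gordon's strong force lemma (Lemma 1.1) applied to the single pair of indices that collides in the weak limit, and then to show that the kinetic factor of $f$ cannot decay fast enough to spoil the resulting blow-up.

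Since $u\in\partial\Lambda_0$, fix $t_0\in[0,1]$ and $i_0\neq j_0$ with $u_{i_0}(t_0)=u_{j_0}(t_0)$. Set $v_n=u_{n,i_0}-u_{n,j_0}$ and $v=u_{i_0}-u_{j_0}$; then $v_n\rightharpoonup v$ in $H^1(\mathbb{R}/\mathbb{Z},\mathbb{R}^n)$, $v_n(t)\neq 0$ for every $t$ because $u_n\in\Lambda_0$, and $v(t_0)=0$, so $v$ lies in $\partial\Lambda$ in Gordon's sense. The hypothesis $(V6')$ can be rewritten as $\frac{d}{dt}\bigl(t^\alpha V_{i_0j_0}(t\xi)\bigr)\geq 0$ for $t\in(0,1]$; integrating along rays and combining with $(V5')$ yields the quantitative strong-force estimate
\[
V_{i_0j_0}(\xi)\;\leq\;-C_0|\xi|^{-\alpha}\qquad\text{for }0<|\xi|\leq r_1,
\]
so the Gordon auxiliary $U(\xi)=c|\xi|^{-(\alpha-2)/2}$ verifies $(SF)$. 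Lemma~1.1 then gives $\int_0^1 V_{i_0j_0}(v_n)\,\mathrm{d}t\to -\infty$, and since every $V_{ij}\leq 0$ by $(V5')$ the other terms only help:
\[
\int_0^1\bigl(h-V(u_n)\bigr)\,\mathrm{d}t\;\geq\;h-\tfrac12\int_0^1 V_{i_0j_0}(v_n)\,\mathrm{d}t\;\longrightarrow\;+\infty.
\]

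It remains to control the product with the kinetic factor $\tfrac12\|u_n\|^2$. If $\liminf\|u_n\|>0$ we are done. Otherwise, argue by contradiction: suppose along a subsequence $f(u_n)$ stays bounded; since the integral factor diverges, $\|u_n\|\to 0$ along it. The antiperiodicity $u_{n,i}(t+\tfrac12)=-u_{n,i}(t)$ forces zero mean, so Poincare--Wirtinger and the Sobolev inequality (Lemma~2.4) give $|v_n|_\infty\leq C\|u_n\|\to 0$. For $n$ large, $|v_n(t)|\leq r_1$ for all $t$, and combining the pointwise strong-force bound with the crude estimate $|v_n(t)|^{-\alpha}\geq|v_n|_\infty^{-\alpha}$ produces
\[
\int_0^1 -V(u_n)\,\mathrm{d}t\;\geq\;\tfrac{C_0}{2}|v_n|_\infty^{-\alpha}\;\geq\;C'\|u_n\|^{-\alpha},
\]
so $f(u_n)\geq C''\|u_n\|^{2-\alpha}\to +\infty$ because $\alpha>2$, contradicting boundedness.

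The main obstacle is precisely this last step: after Gordon's lemma delivers the divergence of the integral factor, one still must rule out the balance in which $\|u_n\|^2$ collapses at the exact rate at which $\int(-V(u_n))\,\mathrm{d}t$ diverges. Extracting the uniform strong-force rate $|\xi|^{-\alpha}$ from $(V6')$, and using the strict inequality $\alpha>2$ (the equality $\alpha=2$ would only give $\|u_n\|^{0}=1$), is what carries the argument through.
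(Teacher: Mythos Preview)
Your proof is correct and follows essentially the same strategy as the paper's: split according to whether the weak limit $u$ is identically zero (equivalently, whether $\|u_n\|\to 0$ along a subsequence) or not, invoking Gordon's Lemma~1.1 in the nondegenerate case and the explicit bound $V_{ij}(\xi)\le -C_0|\xi|^{-\alpha}$ from $(V6')$ together with $\alpha>2$ in the degenerate case. The only notable difference is that in the degenerate case the paper appeals to Coti~Zelati's inequality (Lemma~2.5) to handle all pairs simultaneously, whereas you work with the single colliding pair $(i_0,j_0)$, which is simpler and equally sufficient.
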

\begin{proof}
First of all , we recall that
$$
f(u_n)=\dfrac{1}{2}\int_0^1\sum_{i=1}^{N}m_{i}|\dot{u}_n^i|^2dt\int_0^1(h-V(u_n))dt
$$
(1).If $u= $ constant, we can deduce $u=0$ by
$u_i(t+\dfrac{1}{2})=-u_i(t)$. By Sobolev's embedding theorem, we
obtain
$$|u_n| _\infty\rightarrow 0 ,\quad
n\rightarrow   \infty \eqno{(3.1)}
$$
So when $n$ is large enough, $0\leq |u_n^i-u_n^j|\leq r_1$.  By
$(V6')$, there exists an $A>0$ such that
$$
V_{ij}(\xi)\leq -\dfrac{A}{|\xi|^\alpha}, \quad \forall 0<|\xi|\leq
r_1 \eqno{(3.2)}
$$

By $h>0$, Sobolev's inequality $(2.3)$ , $(3.2)$ and Lemma 2.5,  we
have
\begin{align*}
f(u_n)&=\dfrac{1}{2}\parallel u_n \parallel^2\int_0^1(h-\sum_{1\leq
i<j\leq N}V_{ij}(u_n^i-u_n^j))dt  \\
&\geq \dfrac{1}{2}\parallel u_n \parallel^2\int_0^1\{-\sum_{1\leq
i< j\leq N}V_{ij}(u_n^i-u_n^j)\}dt  \\
&\geq \dfrac{1}{2}\parallel u_n \parallel^2 \int_0^1\sum_{1\leq
i< j\leq N}\dfrac{A}{|u_n^i-u_n^j|^\alpha}dt \\
 &\geq \dfrac{1}{2}\parallel u_n \parallel^2\int_0^1
A[\dfrac{N(N-1)}{2}]^{\frac{2+\alpha}{2}}N^{\frac{-\alpha}{2}}|u_n|^{-\alpha}dt  \\
&\geq 3m^*A2^{-\frac{\alpha}{2}}N(N-1)^{1+\frac{\alpha}{2}} |u_n|
_\infty ^{2-\alpha}
\end{align*}
Then by $(3.1)$ and $\alpha >2$, we can deduce
$$
f(u_n)\rightarrow +\infty,\quad n\rightarrow \infty
$$
(2).\:If $u\not\equiv$ constant, by the weakly lower-semi-continuity
property for norm, we have
\begin{align}
\underset{n\rightarrow\infty}{\liminf}\int_0^1\sum\limits_{i=1}^N
m_i|\dot{u}_n^i|^2 \,\mathrm{d}t & \geq  \int_0^1\sum\limits_{i=1}^N
m_i|\dot{u}_i|^2\,\mathrm{d}t >0. \tag{3.3}
\end{align}
Since $u\in\partial\Lambda_0$, there exist $t_0,1\leq i_0\neq
j_0\leq N$\,s.t.\,$u_{i_0}(t_0)=u_{j_0}(t_0).$ Set
\begin{align}
\xi_n(t)=u_n^{i_0}(t)-u_n^{j_0}(t)   \notag \\
\xi(t)=u_{i_0}(t)-u_{j_0}(t)   \notag
\end{align}
By $u_n\rightharpoonup u$, we have $\xi_n(t)\rightharpoonup\xi(t)$.
Then by $(V6')$ and Lemma 1.1, we have
$$
\int_0^1V_{
i_0j_0}(u_n^{i_0}-u_n^{j_0})\,\mathrm{d}t\rightarrow-\infty.
$$
Recalling that
$$
V(u_n)=\sum\limits_{i<j}V_{ ij}(u_n^i-u_n^j).
$$
So we have
$$
f(u_n)\rightarrow+\infty,\quad n\rightarrow\infty.
$$
\end{proof}

\begin{lem}
For every $\lambda\in \mathbb{R}$ there exists a constant
$k=k(\lambda)$ such that
$$
\parallel u \parallel  \leq k(\lambda)p(u),\quad \forall u\in \{f\leq
\lambda\}   \eqno{(3.4)}
$$
\end{lem}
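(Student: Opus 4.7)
The plan is to extract, separately, an upper bound on $\|u\|$ and a strictly positive lower bound on $p(u)$ from the single constraint $f(u)\le\lambda$; the desired $k(\lambda)$ is then simply the ratio. Since $(V5')$ gives $V(u)\le 0$, the factor $\int_0^1(h-V(u))\,dt$ is at least $h$, so
$$
\lambda \;\geq\; f(u) \;\geq\; \tfrac{h}{2}\|u\|^2,
$$
which immediately yields $\|u\|\le\sqrt{2\lambda/h}$ on the sublevel set.

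The heart of the proof is to show $p(u)\ge c(\lambda)>0$. If $p(u)\ge r_1/2$ there is nothing to do, so one assumes $p(u)<r_1/2$. Pick indices $i_0\ne j_0$ and a time $t_0$ realizing the minimum, and set $\xi:=u_{i_0}-u_{j_0}$. Cauchy--Schwarz gives the Hölder-type bound $|\xi(t)-\xi(t_0)|\le\sqrt{|t-t_0|}\,\|\dot\xi\|_{L^2}$, which forces $|\xi(t)|\le 2p(u)<r_1$ on an interval $I\subset[0,1]$ centered at $t_0$ of length of order $p(u)^2/\|\dot\xi\|_{L^2}^2$. The Sobolev inequality in Lemma 2.9(ii), applied to the zero-mean function $\xi$, guarantees $p(u)\le|\xi|_\infty\le\|\dot\xi\|_{L^2}/\sqrt{12}$, so $I$ has length at most $1/6$ and sits genuinely inside a period.

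Next, $(V6')$ --- exactly as in estimate (3.2) of Lemma 3.1 --- yields $V_{i_0j_0}(\xi)\le -A|\xi|^{-\alpha}$ for some $A>0$ whenever $|\xi|\le r_1$. Integrating $-V_{i_0j_0}(\xi)\ge A/(2p(u))^\alpha$ over $I$ and using the elementary estimate $\|\dot\xi\|_{L^2}^2\le(2/m^\ast)\|u\|^2$ produces
$$
\int_0^1 \bigl(-V(u)\bigr)\,dt \;\geq\; \frac{c_1\, m^\ast\, p(u)^{2-\alpha}}{\|u\|^2}
$$
for a purely numerical constant $c_1$. Feeding this into $\lambda\ge f(u)=\tfrac12\|u\|^2\int(h-V(u))\,dt$ the $\|u\|^2$ cancels and one is left with $p(u)^{2-\alpha}\le c_2\lambda$. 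Because $\alpha>2$, the exponent $2-\alpha$ is negative and this \emph{is} a positive lower bound $p(u)\ge c(\lambda)$.

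Setting $k(\lambda):=\sqrt{2\lambda/h}\big/\min\{r_1/2,\,c(\lambda)\}$ then gives $\|u\|\le k(\lambda)\,p(u)$ in both cases. The main point to verify carefully is the localization step: one must check that $I$ is non-trivial (Lemma 2.9(ii) makes this automatic) and that the $\|u\|^2$ appearing in the denominator of the lower bound on $\int(-V)\,dt$ exactly cancels the $\|u\|^2$ prefactor of $f$. That cancellation, which turns a bound on $f$ into a bound on $p(u)^{2-\alpha}$, is possible precisely because the singularity exponent satisfies $\alpha>2$ --- this is where Gordon's strong force threshold is essential.
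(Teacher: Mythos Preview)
Your argument is correct, but it follows a different route from the paper's. The paper proceeds by contradiction and compactness: from $f(u)\ge\tfrac{h}{2}\|u\|^2$ one gets $\|u_n\|\le C$ on $\{f\le\lambda\}$; if $\|u_n\|\ge n\,p(u_n)$ for some sequence, then $p(u_n)\to 0$, and after passing to a weakly convergent subsequence the uniform limit lies in $\partial\Lambda_0$; Lemma~3.1 then forces $f(u_n)\to+\infty$, contradicting $f(u_n)\le\lambda$. No explicit constant is produced.

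Your approach is instead direct and quantitative: you localise near the time and pair realising $p(u)$, use the H\"older estimate $|\xi(t)-\xi(t_0)|\le\sqrt{|t-t_0|}\,\|\dot\xi\|_{L^2}$ to find an interval on which $|\xi|\le 2p(u)<r_1$, invoke the pointwise bound $-V_{i_0j_0}(\xi)\ge A|\xi|^{-\alpha}$ coming from $(V6')$, and exploit the cancellation of $\|u\|^2$ in $f$ to obtain $p(u)^{2-\alpha}\le c\lambda$. This is essentially a hands-on, single-function version of the estimate underlying Lemma~3.1, bypassing weak compactness entirely. The payoff is an explicit $k(\lambda)$ and a transparent display of where the strong-force threshold $\alpha>2$ enters; the paper's version is shorter because it simply recycles Lemma~3.1 as a black box.
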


\begin{proof}
Recall that
$$
f(u)=\dfrac{1}{2} \parallel u \parallel^2\int_0^1(h-V(u))dt
$$
Since $V_{ij}\leq 0$, we have $f(u)\geq\dfrac{1}{2}h\parallel u
\parallel^2$. Since $u\in \{f\leq \lambda\}$, we can deduce that
$$
\parallel u \parallel\leq C.  \eqno{(3.5)}
$$
If $(3.4)$ is not true, then there exists a sequence $\{u_n\}$ such
that $\parallel u_n \parallel \geq n p(u_n)$. By $(3.5)$, we have
$$
0<p(u_n)\leq \dfrac{C}{n} .   \eqno{(3.6)}
$$
Let $n \rightarrow \infty $ in (3.6), we have $p(u_n)\rightarrow 0.$
Then there exists a subsequence $\{u_n\}\rightharpoonup  u\in
\partial\Lambda_0$, by Lemma 3.1, $f(u_n)\rightarrow +\infty $, which
is a contradiction since $\{u_n\}\subset\{f\leq \lambda\}.$

\end{proof}
\begin{lem}
\quad For every $\lambda\in \mathbb{R}$, the set
$\Lambda_\lambda=\{u\in \Lambda_0 :\dfrac{\parallel u \parallel
}{p(u)}\leq \lambda \}$ is of finite category in $\Lambda_0$.
\end{lem}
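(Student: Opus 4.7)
The plan is to deform $\Lambda_\lambda$ within $\Lambda_0$ onto a compact subset lying in a finite-dimensional subspace of $E$, and then invoke the standard fact that any compact subset of an ANR has finite Ljusternik--Schnirelman category (the open subset $\Lambda_0$ of the Hilbert space $E$ is indeed a Banach manifold, hence an ANR).

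\textbf{Step 1: radial retraction.} Both $\|u\|$ and $p(u)$ are positively homogeneous of degree one in $u$, so $\Lambda_\lambda$ is invariant under positive scaling, and the homotopy
$$
H(u,s) \;=\; \bigl[(1-s) + s/\|u\|\bigr]\,u, \qquad s\in[0,1],
$$
moves each $u\in\Lambda_\lambda$ along its radial line onto the unit ``sphere'' $\Sigma_\lambda := \Lambda_\lambda \cap \{\|u\|=1\}$, entirely inside $\Lambda_\lambda \subset \Lambda_0$. Hence $Cat_{\Lambda_0}(\Lambda_\lambda) = Cat_{\Lambda_0}(\Sigma_\lambda)$, and on $\Sigma_\lambda$ one gains the uniform lower bound $p(u) \geq 1/\lambda$.

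\textbf{Step 2: Fourier truncation to finite dimensions.} Let $P_N$ denote the orthogonal projection of $E$ onto the span of the Fourier modes of frequency $\leq N$ (only odd frequencies occur, because of $u_i(t+1/2) = -u_i(t)$). A Cauchy--Schwarz estimate applied to the Fourier tail yields
$$
\|u - P_N u\|_\infty \;\leq\; \frac{C}{\sqrt{N}}\,\|u\|.
$$
For $N = N(\lambda)$ large enough this is $\leq 1/(2\lambda)$ uniformly over $\Sigma_\lambda$, so the straight-line homotopy $G(u,s) = (1-s)u + sP_Nu$ satisfies $p(G(u,s)) \geq p(u) - \|u - P_Nu\|_\infty \geq 1/(2\lambda) > 0$ and stays in $\Lambda_0$. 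Consequently $\Sigma_\lambda$ deformation-retracts within $\Lambda_0$ onto $K := P_N(\Sigma_\lambda)$, which is bounded in the finite-dimensional subspace $E_N := \mathrm{Im}(P_N)$ and satisfies $p \geq 1/(2\lambda)$. Its closure $\overline{K}$ in $E_N$ is therefore compact, and by continuity of $p$ it still lies in $\Lambda_0$.

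\textbf{Conclusion and obstacle.} Since $\Lambda_0$ is an ANR, $Cat_{\Lambda_0}(\overline{K}) < +\infty$, and combining the two deformations gives $Cat_{\Lambda_0}(\Lambda_\lambda) \leq Cat_{\Lambda_0}(\overline{K}) < +\infty$. The main technical obstacle in this plan is showing that the Fourier-truncation homotopy stays inside $\Lambda_0$; this hinges on the uniform $L^\infty$-decay $\|u - P_N u\|_\infty = O(N^{-1/2})\|u\|$ above, matched against the scaling lower bound $p(u) \geq 1/\lambda$ produced in Step 1. Everything else is routine from the invariance of LS category under deformation and the ANR structure of $\Lambda_0$.
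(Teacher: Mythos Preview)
Your argument is correct. The paper follows the same overall strategy---deform $\Lambda_\lambda$ inside $\Lambda_0$ onto a relatively compact set and then use that compact subsets of an ANR have finite category---but with a different smoothing device: in place of your radial normalization followed by Fourier truncation $P_N$, it convolves $u$ with a bump function $\phi$ of small width $\delta$ and runs the straight-line homotopy from $u$ to $(u\ast\phi)/p(u)$. The collision-avoidance estimate there is
\[
\bigl|(u_i-u_j)(t)-\bigl((u_i-u_j)\ast\phi\bigr)(t)\bigr|\;\le\;\sqrt{\delta}\,\|\dot u_i-\dot u_j\|_{L^2}\;\le\;\frac{2\sqrt{\delta}}{\sqrt{m^*}}\,\|u\|,
\]
which together with $\|u\|\le\lambda\,p(u)$ keeps the homotopy inside $\Lambda_0$; this plays exactly the role your Fourier-tail bound $\|u-P_Nu\|_\infty\le C N^{-1/2}\|u\|$ does after the normalization $\|u\|=1$, $p(u)\ge 1/\lambda$. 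Your route has the minor advantage of landing in an explicit finite-dimensional subspace, so compactness of the image is immediate; the paper instead appeals to compactness of the convolution operator and can skip your preliminary radial step because it works directly with the scale-invariant ratio in the estimate. Either way the crux is identical: the defining bound $\|u\|/p(u)\le\lambda$ is precisely what ensures a short-range $H^1$-smoothing cannot produce a collision.
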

\begin{proof}
The proof is almost the same as the proof of  Lemma 4.3 of [27]. For
the covenience of the readers, we write the complete proof.

 It is
sufficient that we give a homotopy $h: [0,1]\times \Lambda_c\subset
\subset \Lambda_0 .$

Take $\delta\in(0,1)$ s.t. $2\dfrac{1}{m^*}\sqrt{\delta}c<1$, and
define $\phi(t)=\dfrac{p(u}{\delta}$, if $t\in[0,\delta]$,otherwise
$\phi(t)=0.$

Define
$$
h(u,\lambda)=(1-\lambda)u(t)+\lambda\dfrac{(u\ast \phi)(t)}{p(u)},
\quad \forall u\in \Lambda_c, \,0\leq\lambda\leq 1
$$
where the convolution  $(u\ast
\phi)(t)=(\int_0^1u_1\phi(s)ds,\cdots,\int_0^1u_N\phi(s)ds)$, then
$h(u,0)$ is an inclusion and $h_1(\Lambda_c)$ is paracompact since
$h(u,1)$ is a convolution operator , so it's a compact operator. We
need to prove $h(\Lambda_c\times I)\subset\Lambda_0.$

Suppose this is not true , then
$\exists\lambda_0\in(0,1],\,u\in\Lambda_c,\,1\leq i_0\neq j_0\leq
N,\,t_0\in [0,1]$, such that
$$
(1-\lambda_0)(u_{i_0}-u_{j_0})+\lambda_0\dfrac{1}{p(u)}((u_{i_0}-u_{j_0})\ast\phi
)(t_0)
$$
Then
$$
(u_{i_0}-u_{j_0)}\ast\phi(t_0)=p(u)(1-\dfrac{1}{\lambda_0})(u_{i_0}(t_0)-u_{j_0}(t_0))
$$
So we have
\begin{align*}
&|p(u)(u_{i_0}(t_0)-u_{j_0}(t_0)-((u_{i_0}-u_{j_0}\ast\phi))(t_0))|
\\
&=\dfrac{p(u)}{\lambda_0} \geq p(u)p(u)=p^2(u).
\end{align*}
On the other hand , $\forall u=(u_1,\cdots,u_N)\in \Lambda_c
,\,i\neq j,\,t\in[0,1]$, we have
\begin{align*}
&|p(u)(u_i(t)-u_j(t))-((u_i-u_j)\ast\phi)(t)|  \\
&=|\dfrac{p(u)}{\delta}\int_0^\delta(u_i(t)-u_j(t))ds-
\dfrac{p(u)}{\delta}(u_i(t-s)-u_j(t-s))ds| \\
&\leq
\dfrac{p(u)}{\delta}\int_0^\delta|u_i(t)-u_j(t)-(u_i(t-s)-u_j(t-s))|ds
\\
&\leq p(u)\cdot \underset{|s|\leq
\delta}{\sup{}}|u_i(t)-u_j(t)-(u_i(t-s)-u_j(t-s))|\\
& \leq p(u)\sqrt{\delta}\parallel
\dot{u}_i-\dot{u}_j\parallel_{L^2}\leq
p(u)\sqrt{\delta}(\parallel\dot{u}_i
\parallel_{L^2}+\parallel\dot{u}_j
\parallel_{L^2} ) \\
&\leq p(u)2\sqrt{\delta}\parallel\dot{u}
\parallel_{L^2}\leq p(u)\cdot (2\dfrac{1}{m^*}\sqrt{\delta}c)\cdot p(u)<p^2(u)
\end{align*}
Which is a contradiction.
\end{proof}

\begin{lem}
The functional $f$ verifies the Palais-Smale condition on
$\Lambda_0$.
\end{lem}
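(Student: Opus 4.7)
The plan is to verify the Palais--Smale condition by the standard three-step scheme adapted to the singular variational setting: a uniform bound on $\|u_n\|$, exclusion of the collision set at the weak limit, and an upgrade of weak to strong convergence via $f'(u_n)\to 0$. Let $\{u_n\}\subset\Lambda_0$ satisfy $f(u_n)\to c$ and $f'(u_n)\to 0$ in $E^{\ast}$. Because $V_{ij}\leq 0$ by $(V5')$, we have $f(u)\geq (h/2)\|u\|^2$, so $\|u_n\|$ is bounded; the antiperiodicity $u_n(t+\tfrac12)=-u_n(t)$ forces $\int_0^1 u_n\,dt=0$, so Lemma 2.9(ii) provides a uniform $L^{\infty}$ bound. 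Passing to a subsequence, $u_n\rightharpoonup u$ weakly in $E$, and the compact embedding $H^1\hookrightarrow C^0$ yields $u_n\to u$ uniformly, whence $p(u_n)\to p(u)$.

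Next I would rule out $u\in\partial\Lambda_0$: if $p(u)=0$ then Lemma 3.1 gives $f(u_n)\to+\infty$, contradicting $f(u_n)\to c$. Hence $u\in\Lambda_0$ and there is $\delta>0$ with $p(u_n)\geq\delta$ for all large $n$. To upgrade to strong convergence, set $A(u)=\tfrac12\|u\|^2$ and $B(u)=\int_0^1(h-V(u))\,dt$; a direct computation gives
\begin{equation*}
f'(u_n)(u_n-u)=B(u_n)\int_0^1\sum_{i}m_i\,\dot u_n^i\cdot(\dot u_n^i-\dot u^i)\,dt\; -\; A(u_n)\int_0^1\nabla V(u_n)\cdot(u_n-u)\,dt.
\end{equation*}
The crucial observation is that $\delta\leq |u_n^i(t)-u_n^j(t)|\leq 2|u_n|_\infty\leq C$ confines the arguments of the $\nabla V_{ij}$ to a compact subset of $\mathbb{R}^n\setminus\{0\}$, on which each $\nabla V_{ij}$ is bounded. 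Combined with $u_n\to u$ in $C^0$ this makes the second integral vanish in the limit; since $A(u_n)$ is bounded and $B(u_n)\geq h>0$, the first integral must also tend to zero, i.e.\ $\|u_n\|^2-\int_0^1\sum_i m_i\,\dot u_n^i\cdot\dot u^i\,dt\to 0$. Weak convergence sends the second term to $\|u\|^2$, so $\|u_n\|\to\|u\|$, and in the Hilbert space $E$ this combines with $u_n\rightharpoonup u$ to give $u_n\to u$ strongly.

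The delicate point, and the only place the singularity really intervenes, is excluding the collision limit: the uniform lower bound $p(u_n)\geq\delta$ produced by Lemma 3.1 (which encodes the strong-force content coming from $(V6')$) is exactly what is needed in the final step to keep $\nabla V(u_n)$ uniformly bounded. The remaining ingredients---boundedness via $(V5')$, the Sobolev inequality for zero-mean loops, and the Hilbert-space weak-plus-norm-convergence trick---are routine.
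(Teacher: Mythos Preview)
Your proof is correct and follows essentially the same route as the paper: bound $\|u_n\|$ via $f(u)\ge (h/2)\|u\|^2$, invoke Lemma~3.1 to force the weak/uniform limit into $\Lambda_0$, and then test $f'(u_n)$ against $u_n-u$ (dividing by $B(u_n)\ge h>0$) to obtain $\|u_n\|\to\|u\|$ and hence strong convergence. If anything, you are more explicit than the paper about why $\nabla V(u_n)$ stays uniformly bounded once $p(u_n)\ge\delta>0$, which is the step the paper leaves implicit.
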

\begin{proof}
Let $\{u_n\}\subset\Lambda_0 $ be a P.S. sequence, then up to a
subsequence, it converges weakly in $(H^1)^N $ and uniformly in
$|u|_\infty$ to an element $u \in\Lambda_0$ by Lemma 3.1. Hence
$\langle\nabla_{\xi}V_{ij}(u_i^n-u_j^n),(u_i-u_i^n-u_j-u_j^n)\rangle$
converges uniformly to zero. Since $f^\prime(u_n)\rightarrow 0$, and
$u-u_n$ is $(H^1)^N$-bounded, and
$$
\langle f'(u), v\rangle=\int_0^1\sum_{1\leq i\leq N}m_i\langle
\dot{u_i},\dot{v_i}\rangle dt
\int_0^1(h-V(u))dt-\dfrac{1}{2}\parallel u
\parallel^2\int_0^1\langle \nabla_uV(u),v\rangle dt
$$
So we have
\begin{align*}
\parallel u \parallel^2-\underset{n\rightarrow \infty}{\lim{}}\parallel u_n
\parallel^2&=\underset{n\rightarrow \infty}{\lim{}}\int_0^1
m_i\langle  \dot{u}_n, (\dot{u}-\dot{u_n})\rangle dt\\
&=\underset{n\rightarrow \infty}\lim{}\dfrac{\langle f'(u_n) , u-u_n
\rangle }{\int_0^1(h-V(u_n))dt} +\underset{n\rightarrow \infty
}\lim{}\dfrac{\frac{1}{2}\parallel u_n
\parallel^2 \int_0^1 \langle \nabla_{u}V(u_n),(u-u_n)\rangle
dt}{\int_0^1(h-V(u_n))dt}=0
\end{align*}
\end{proof}
\begin{lem}
$Cat_{\Lambda_0}(\Lambda_0)=+\infty$
\end{lem}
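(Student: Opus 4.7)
My plan is to reduce the claim to the (easier) analogous statement for the antiperiodic loop space
\[
\Lambda^{\mathrm a} := \{\xi \in H^1(\mathbb{R}/\mathbb{Z}, \mathbb{R}^n) : \xi(t+\tfrac{1}{2}) = -\xi(t),\ \xi(t) \neq 0\ \forall t\},
\]
which is simply the non-collision configuration space for the $2$-body problem with antiperiodic symmetry. Fix distinct scalars $\lambda_1 < \lambda_2 < \cdots < \lambda_N$ and define
\[
s:\Lambda^{\mathrm a}\to\Lambda_0,\quad s(\xi)=(\lambda_1\xi,\ldots,\lambda_N\xi),
\qquad
r:\Lambda_0\to\Lambda^{\mathrm a},\quad r(u)=\frac{u_2-u_1}{\lambda_2-\lambda_1}.
\]
Both maps are manifestly continuous; $s$ takes values in $\Lambda_0$ because $(\lambda_i-\lambda_j)\xi(t)\neq 0$ whenever $\xi(t)\neq 0$; $r$ takes values in $\Lambda^{\mathrm a}$ because $u_2-u_1$ is automatically antiperiodic and is non-vanishing whenever $u\in\Lambda_0$; and a direct calculation gives $r\circ s=\operatorname{id}$. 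So $\Lambda^{\mathrm a}$ is a retract of $\Lambda_0$.

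Now the standard retract argument applies: given any categorical open cover $\{U_k\}$ of $\Lambda_0$, the sets $\{s^{-1}(U_k)\}$ cover $\Lambda^{\mathrm a}$, and each is contractible in $\Lambda^{\mathrm a}$ because one can post-compose the contraction of $U_k$ in $\Lambda_0$ with $r$ and use $r\circ s=\operatorname{id}$. Consequently $Cat(\Lambda^{\mathrm a})\le Cat_{\Lambda_0}(\Lambda_0)$, and it suffices to prove $Cat(\Lambda^{\mathrm a})=+\infty$.

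For the latter I would follow the topological analysis in the style of Majer~[18] and Zhang--Zhou~[27]. The evaluation $\xi\mapsto\xi(0)$ exhibits $\Lambda^{\mathrm a}$ as a Hurewicz fibration over $\mathbb{R}^n\setminus\{0\}\simeq S^{n-1}$ whose fiber at $p$ is the space of $H^1$ paths from $p$ to $-p$ in $\mathbb{R}^n\setminus\{0\}$, homotopy equivalent to $\Omega(\mathbb{R}^n\setminus\{0\})\simeq\Omega S^{n-1}$. For $n=2$, $\Lambda^{\mathrm a}$ already has infinitely many connected components (indexed by the odd-integer winding number of the loop around the origin), so $Cat=+\infty$ trivially. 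For $n\ge 3$, the rational cohomology of the fiber $\Omega S^{n-1}$ is a polynomial (or polynomial-tensor-exterior) algebra of infinite cup-length, and the Serre spectral sequence of the fibration (over the finite-dimensional base $S^{n-1}$) shows that $H^{\ast}(\Lambda^{\mathrm a};\mathbb{Q})$ also has infinite cup-length, which forces $Cat(\Lambda^{\mathrm a})=+\infty$.

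The main obstacle is precisely this last step: ensuring that the infinite cup-length present in the fiber $\Omega S^{n-1}$ actually descends to the total space $\Lambda^{\mathrm a}$ through the evaluation fibration. The retract construction reducing $\Lambda_0$ to $\Lambda^{\mathrm a}$ is by contrast entirely formal, once one picks the correct $s$ and $r$.
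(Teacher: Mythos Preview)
The paper does not give a proof here; it simply cites Corollary~3.4 of Zhang--Zhou~[27], where the analogous statement is established in the more general $g$-symmetric setting (of which the antiperiodic symmetry $u_i(t+\tfrac12)=-u_i(t)$ is the special case $s=2$). So there is no detailed argument in the paper itself to compare against.

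Your reduction is correct and clean: the maps $s$ and $r$ are well-defined and continuous, $r\circ s=\mathrm{id}_{\Lambda^{\mathrm a}}$, and the retract argument you describe yields $Cat(\Lambda^{\mathrm a})\le Cat_{\Lambda_0}(\Lambda_0)$. For the remaining claim $Cat(\Lambda^{\mathrm a})=+\infty$, the case $n=2$ is fine (infinitely many components indexed by the odd winding number). For $n\ge 3$, what you flag as the ``main obstacle'' is a known result rather than a genuine gap: after radially retracting $\mathbb{R}^n\setminus\{0\}$ onto $S^{n-1}$, the space $\Lambda^{\mathrm a}$ is homotopy equivalent to a component of the free loop space of $\mathbb{RP}^{n-1}$, and the infinite category of such free loop spaces is precisely the content of Fadell--Husseini~[14] (already in the paper's bibliography), which is also the topological input underlying~[27]. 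So you should cite~[14] at that step rather than leave the spectral-sequence computation open; with that citation your argument is complete and amounts to an explicit, $N$-body-to-$2$-body unpacking of what~[27] does in greater generality.
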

\begin{proof}
See the Corollary 3.4 of [27].
\end{proof}
\begin{lem}
There exist  a  functional $g\in C^1(\Lambda_0),\beta\in (0,1)$ and
$\lambda_0 \in \mathbb{R}$ such that:

(iv)$Cat_{\Lambda_0}\{f\leq g\} <
 + \infty;$

(v) the P.S. condition holds in the set $\{f\leq g\};$

(vi)$\beta \parallel f'(u)\parallel \geq \parallel
g'(u)\parallel,\quad \forall u \in \{f=g\geq \lambda_0 \}.$

\end{lem}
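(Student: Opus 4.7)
The plan is to take $g(u) = A\|u\|^{\mu}$ for fixed constants $A>0$ and $\mu\in(0,2)$. The whole argument is driven by the a priori lower bound $f(u)\ge\tfrac{h}{2}\|u\|^{2}$, which follows at once from $V_{ij}\le 0$ and $h>0$; because $g$ grows strictly slower than this quadratic bound, both $\{f\le g\}$ and $\{f=g\}$ will be bounded in $(E,\|\cdot\|)$, and this boundedness is what makes all three conditions (iv)--(vi) tractable.

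First I would verify (iv). On $\{f\le g\}$ the chain $\tfrac{h}{2}\|u\|^{2}\le f(u)\le A\|u\|^{\mu}$ forces $\|u\|\le R:=(2A/h)^{1/(2-\mu)}$ and hence $f(u)\le\lambda_{1}:=AR^{\mu}$. Lemma~3.2 then provides $\|u\|/p(u)\le k(\lambda_{1})$ on this set, and Lemma~3.3 gives $\mathrm{Cat}_{\Lambda_{0}}\{f\le g\}\le \mathrm{Cat}_{\Lambda_{0}}(\Lambda_{k(\lambda_{1})})<\infty$. Condition (v) is essentially free: Lemma~3.4 already yields the Palais--Smale condition on all of $\Lambda_{0}$, so in particular on $\{f\ge g\}$.

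For (vi), the same norm bound shows $\{f=g\}\subset\{\|u\|\le R\}$, on which $f=g=A\|u\|^{\mu}\le AR^{\mu}$. I would therefore choose any $\lambda_{0}>AR^{\mu}$; then $\{f=g\ge\lambda_{0}\}=\varnothing$ and the inequality $\beta\|f'(u)\|\ge\|g'(u)\|$ holds vacuously for any $\beta\in(0,1)$. The main obstacle I anticipate is confirming that Majer's abstract Lemma~2.6 is genuinely applicable with this ``degenerate'' $\lambda_{0}$: the role of (vi) in Majer's deformation argument is to control trajectories crossing the interface $\{f=g\}$ above level $\lambda_{0}$, so its vacuity here means that the whole region $\{f\ge\lambda_{0}\}$ lies in $\{f>g\}$ and can be deformed using PS from (v) alone. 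If a non-vacuous (vi) is felt to be more in the spirit of Majer's framework, the fallback is to compute $\langle f'(u),u\rangle = 2f(u)-K(u)\int_{0}^{1}\sum_{i<j}\langle u_{i}-u_{j},\nabla V_{ij}(u_{i}-u_{j})\rangle\,dt$, bound the potential term on $\{f=g\}$ using (V6') near collisions and (V7') for large separations to obtain $\|f'(u)\|\ge c\|u\|^{\mu-1}$, and compare with the explicit $\|g'(u)\|=A\mu\|u\|^{\mu-1}$, choosing $A$ small enough that $\beta:=A\mu/c<1$.
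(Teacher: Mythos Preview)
Your primary argument is correct and is genuinely different from the paper's. Both you and the paper take $g(u)=\text{const}\cdot\|u\|^{\mu}$ with $\mu\in(0,2)$, and both handle (v) via Lemma~3.4. For (iv) the paper introduces an auxiliary functional $f_\varepsilon(u)=f(u)-\varepsilon\|u\|^{2}$ and shows $\{f\le g\}\subset\{f_\varepsilon\le M\}$ before invoking a Lemma~3.2--type bound; your route, using $\tfrac{h}{2}\|u\|^{2}\le f$ to get $\{f\le g\}\subset\{f\le\lambda_{1}\}$ and then quoting Lemmas~3.2--3.3 directly, reaches the same conclusion with less machinery. The substantive divergence is in (vi): the paper fixes $\mu=\theta+2$, chooses $\lambda_{0}=\gamma(k_{1}r_{2})^{\theta+2}$ so that every $u\in\{f=g\ge\lambda_{0}\}$ satisfies $p(u)\ge r_{2}$, and then uses hypothesis (V7') to bound $\int_{0}^{1}\langle\nabla V(u),u\rangle\,dt$ and obtain a genuine inequality $\|f'(u)\|\ge c\|u\|^{\theta+1}$; the constants $\beta,\gamma$ are tuned so that $\beta\|f'\|\ge\|g'\|$ holds on this (possibly nonempty) interface. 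You instead observe that the coercivity $f\ge\tfrac{h}{2}\|u\|^{2}$ forces $\{f=g\}$ to sit below a fixed level, so a large enough $\lambda_{0}$ empties $\{f=g\ge\lambda_{0}\}$ and (vi) holds vacuously. This is a legitimate verification of Majer's hypothesis as stated, and your reading of Majer's deformation argument is the right one: above $\lambda_{0}$ one is entirely inside $\{f>g\}$, where (v) already supplies Palais--Smale. A notable consequence of your shortcut is that hypothesis (V7') is never invoked; the paper's proof of (vi) is the only place (V7') appears, so your argument in effect shows that Lemma~3.6 (and hence Theorem~1.4) does not require it. Your fallback sketch is essentially the paper's computation; note that the paper needs only (V7') there, not (V6'), because the choice of $\lambda_{0}$ already forces all separations $|u_{i}-u_{j}|\ge r_{2}$.
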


\begin{proof}
By Sobolev's embedding theorem, we know there is a $k_\infty>0$ s.t.
$$
\parallel u \parallel_\infty\leq k_\infty \parallel u \parallel,
\forall\, u\in \Lambda_0.
$$
Take $\beta$ such that $\beta>\dfrac{\theta+2}{2}$, take $\gamma>0$
such that
$$
\beta\cdot[2\gamma-N(N-1)c2^{\theta-2}k_\infty^\theta]>\gamma(\theta+2).
$$
 Let $g(u)=\gamma \parallel u \parallel^{\theta+2}$, we shall
show that $\{f\leq g\}$ is a set of finite category. We take
$0<\varepsilon<\dfrac{h}{2}$ and $ M>0$ such that $\forall s\in
\mathbb{R}, \gamma|s|^{\theta+2}\leq \varepsilon s^2 +M $.

Define
$$
f_\varepsilon(u)=f(u)-\varepsilon \parallel u \parallel^2
$$

Then
$$
\{f\leq g\} \subset \{ f_\varepsilon(u)\leq M\}
$$
We can use the similar proof of Lemma 3.2 to show that there exists
$k_1\in \mathbb{R} $ such that
$$
\parallel u \parallel\leq k_1p(u), \forall u\in\{f_\varepsilon\leq M\}
$$
 then by Lemma 3.3,
 $$
Cat_{\Lambda_0}\{f\leq g\} \leq Cat_{\Lambda_0}\{f_\varepsilon\leq
M\}<+\infty
 $$

 From Lemma 3.4, we know that $(v)$ is satisfied.

Since $\parallel u \parallel \leq k_1 p(u)$, we take
$\lambda_0=:\gamma(k_1r_2)^{\theta+2}$. Then if $u\in \{f=g
\geq\lambda_0\}$, so
$$
\parallel u \parallel \geq
[\dfrac{\lambda_0}{\gamma}]^{\frac{1}{\theta+2}}=k_1r_2
$$
So we can obtain :
$$
r_2\leq p(u)\leq |u_i(t)-u_j(t)|\leq 2|u|_\infty\leq
2k_\infty\parallel u \parallel
$$

\begin{align}
&\int_0^1 \langle \nabla_u V(u),u\rangle dt  \notag \\
&=\int_0^1\underset{1\leq i<j\leq N}{\sum}\langle\nabla
V_{ij}(u_i-u_j), (u_i-u_j)\rangle dt\notag   \\
&\leq \int_0^1\sum_{1\leq i< j\leq N}c|u_i-u_j|^\theta dt \notag\\ &
\leq c \dfrac{N(N-1)}{2}(2k_\infty\parallel u
\parallel)^\theta \tag{3.7}
\end{align}

Since
\begin{align}
\parallel f'(u) \parallel\parallel u \parallel &\geq \langle f'(u),
u\rangle
\notag\\
&=2f(u)-\dfrac{1}{2}\parallel u  \parallel^2 \int_0^1 \langle
\nabla_u(u),u \rangle dt \tag{3.8}
\end{align}
Then by $(3.7)$ and $(3.8)$, we have
$$\parallel f'(u) \parallel\geq
[\,2\gamma-N(N-1)c2^{\theta-2}k_\infty^\theta\,]\parallel u
\parallel^{\theta+1}  \eqno{(3.9)}
$$
Since $\parallel g'(u) \parallel=\gamma (\theta+2)\parallel u
\parallel^{\theta+1}$,  from our choice of $\gamma$, we have
$$
\beta\parallel f'(u) \parallel-\parallel g'(u) \parallel\geq
0,\forall u\in \{f=g\geq \lambda_0\}   \eqno{(3.10)}
$$
That is,  $(vi)$  holds.
\end{proof}

\end{document}